\newtheorem{theorem}{Theorem}
\newtheorem{observation}[theorem]{Observation}
\newtheorem{mylemma}{Lemma}
\newtheorem{mydefinition}{Definition}
\newtheorem{mytheorem}{Theorem}
\newtheorem{myexample}{Example}
\newcommand{\tikzimageswap}[2]{#1} 
\newcommand{\ket}[1]{\left| #1 \right\rangle}
\newcommand{\bra}[1]{\left \langle #1 \right |}
\newcommand{\braket}[2]{\langle #1 | #2 \rangle}
\newcommand{\ketbra}[2]{| #1 \rangle\! \langle #2 |}
\newcommand{\ignore}[1]{}
\DeclareMathOperator{\Tr}{Tr}
\DeclareMathOperator{\Ad}{Ad}
\DeclareMathOperator{\Span}{Span}
\DeclareMathOperator{\id}{id}
\newcommand{\TrB}{\Tr_{\textsc{b}}}
\newcommand{\identity}{\mathds{1}}
\newcommand{\Sgp}{\mathcal{G}} 
\newcommand{\HH}{\mathcal{H}}  
\newcommand{\HS}{\mathcal{H}_{\textsc{s}}}
\newcommand{\HB}{\mathcal{H}_{\textsc{b}}}
\newcommand{\HW}{\mathcal{H}_{\textsc{w}}}
\newcommand{\HSB}{\HS\otimes\HB}
\newcommand{\HSBW}{\HSB\otimes\HW}
\newcommand{\BL}{\mathcal{B}}
\newcommand{\BH}{\BL(\HH)}
\newcommand{\BHS}{\BL(\HS)}
\newcommand{\BHB}{\BL(\HB)}
\newcommand{\BHW}{\BL(\HW)}
\newcommand{\BHSB}{\BL(\HSB)}
\newcommand{\BHSW}{\BL(\HS\otimes\HW)}
\newcommand{\BHSBW}{\BL(\HSBW)}
\newcommand{\US}{\mathrm{U}(\HS)}
\newcommand{\UB}{\mathrm{U}(\HB)}
\newcommand{\UW}{\mathrm{U}(\HW)}
\newcommand{\USB}{\mathrm{U}(\HSB)}
\newcommand{\DD}[1]{\mathcal{D}_{\textsc{#1}}}
\newcommand{\DS}{\DD{s}}
\newcommand{\DB}{\DD{b}}
\newcommand{\DW}{\DD{w}}
\newcommand{\DSB}{\DD{sb}}
\newcommand{\DSW}{\DD{sw}}
\newcommand{\DSBW}{\DD{sbw}}
\newcommand{\rhoS}{\rho_{\textsc{s}}}
\newcommand{\rhoB}{\rho_{\textsc{b}}}
\newcommand{\rhoSB}{\rho_{\textsc{sb}}}
\newcommand{\CS}{\mathcal{V}} 
\newcommand{\AS}{{\mathcal{S}}} 
\newcommand{\AM}{\mathcal{A}} 
\newcommand{\jmdstack}[2]{\genfrac{}{}{0pt}{}{#1}{#2}}
\begin{document}

\title{A general framework for complete positivity}
\author{Jason M. Dominy$^{(2,4)}$, Alireza Shabani$^{(5,6)}$, and Daniel A. Lidar$^{(1,2,3,4)}$}
\affiliation{Departments of $^{(1)}$Chemistry, $^{(2)}$Electrical Engineering, and $^{(3)} $Physics, $^{(4)}$Center for Quantum Information Science \& Technology, University of Southern California, Los Angeles, CA 90089, USA\\
$^{(5)}$Department of Chemistry, University of California, Berkeley, CA 94720, USA\\
$^{(6)}$Google Research, Venice, CA, 90291, USA}

\begin{abstract}
Complete positivity of quantum dynamics is often viewed as a litmus test for physicality, yet it is well known that correlated initial states need not give rise to completely positive evolutions. This observation spurred numerous investigations over the past two decades attempting to identify necessary and sufficient conditions for complete positivity. Here we describe a complete and consistent mathematical framework for the discussion and analysis of complete positivity for correlated initial states of open quantum systems.  This formalism is built upon a few simple axioms and is sufficiently general to contain all prior methodologies going back to Pechakas, PRL (1994) \cite{Pechukas:1994}.  The key observation is that initial system-bath states with the same reduced state on the system must evolve under all admissible unitary operators  to system-bath states with the same reduced state on the system, in order to ensure that the induced dynamical maps on the system are well-defined. Once this consistency condition is imposed, related concepts like the assignment map and the dynamical maps are uniquely defined.  In general, the dynamical maps may not be applied to arbitrary system states, but only to those in an appropriately defined physical domain. We show that the constrained nature of the problem gives rise to not one but three inequivalent types of complete positivity. Using this framework we elucidate the limitations of recent attempts to provide conditions for complete positivity using quantum discord and the quantum data-processing inequality. 
The problem remains open, and may require fresh perspectives and new mathematical tools.  The formalism presented herein may be one step in that direction.
\end{abstract}
\maketitle

\newcommand{\spherePlane}[3]{
	\def\r{1.2} 
	\def\plht{1.5} 
	\def\plwd{2} 
		
	\filldraw[shift={(0:#1)}, shift={(90:#2)}, fill = \plcolor, fill opacity = 0.5] (-\plwd,-\plht) rectangle (\plwd,\plht);
	
	\filldraw[shift={(0:#1)}, shift={(90:#2)}, pattern = north west lines] (0,0) circle [radius = #3];

	\filldraw[ball color=\spcolor, fill opacity = 0.4] (#1,#2) circle[radius=\r]; 
}	

\newcommand{\rotatedSpherePlaneRight}[4]{
	\def\r{1.2} 
	\def\plht{1.5} 
	\def\plwd{2} 
		
	\fill[shift={(0:#1)}, shift={(90:#2)}, domain=0:360, smooth cycle, variable = \g, pattern = north west lines, opacity = 0.5, fill = \plcolor] plot ({cameraProjectX(\r*cos(#3)*cos(\g), \r*sin(#3)*cos(\g), \r*sin(\g))},{cameraProjectY(\r*cos(#3)*cos(\g), \r*sin(#3)*cos(\g), \r*sin(\g))});

	\filldraw[shift={(0:#1)}, shift={(90:#2)}, domain=0:360, smooth cycle, variable = \g, pattern = north west lines, fill opacity = 0.5] plot ({cameraProjectX(#4*cos(#3)*cos(\g), #4*sin(#3)*cos(\g), #4*sin(\g))},{cameraProjectY(#4*cos(#3)*cos(\g), #4*sin(#3)*cos(\g), #4*sin(\g))});
	
	\filldraw[color = \plcolor, fill opacity = 0.5, draw opacity = 0, shift={(0:#1)}, shift={(90:#2)}, domain=(-180-asin((\r*\cz*sin(#3) - \cy*sqrt(\cy*\cy+(\cz*\cz - \r*\r)*sin(#3)*sin(#3)))/(\cy*\cy+\cz*\cz*sin(#3)*sin(#3)))):(180-asin((\r*\cz*sin(#3) + \cy*sqrt(\cy*\cy+(\cz*\cz - \r*\r)*sin(#3)*sin(#3)))/(\cy*\cy+\cz*\cz*sin(#3)*sin(#3)))), smooth, variable = \g] ($cameraProjectX(\r*cos(#3)*cos(180-asin((\r*\cz*sin(#3) + \cy*sqrt(\cy*\cy+(\cz*\cz - \r*\r)*sin(#3)*sin(#3)))/(\cy*\cy+\cz*\cz*sin(#3)*sin(#3)))), \r*sin(#3)*cos(180-asin((\r*\cz*sin(#3) + \cy*sqrt(\cy*\cy+(\cz*\cz - \r*\r)*sin(#3)*sin(#3)))/(\cy*\cy+\cz*\cz*sin(#3)*sin(#3)))), \plht)*(1,0) + cameraProjectY(\r*cos(#3)*cos(180-asin((\r*\cz*sin(#3) + \cy*sqrt(\cy*\cy+(\cz*\cz - \r*\r)*sin(#3)*sin(#3)))/(\cy*\cy+\cz*\cz*sin(#3)*sin(#3)))), \r*sin(#3)*cos(180-asin((\r*\cz*sin(#3) + \cy*sqrt(\cy*\cy+(\cz*\cz - \r*\r)*sin(#3)*sin(#3)))/(\cy*\cy+\cz*\cz*sin(#3)*sin(#3)))), \plht)*(0,1)$) -- ($cameraProjectX(\plwd*cos(#3), \plwd*sin(#3), \plht)*(1,0) + cameraProjectY(\plwd*cos(#3), \plwd*sin(#3), \plht)*(0,1)$) -- ($cameraProjectX(\plwd*cos(#3), \plwd*sin(#3), -\plht)*(1,0) + cameraProjectY(\plwd*cos(#3), \plwd*sin(#3), -\plht)*(0,1)$) -- ($cameraProjectX(\r*cos(#3)*cos(-180-asin((\r*\cz*sin(#3) - \cy*sqrt(\cy*\cy+(\cz*\cz - \r*\r)*sin(#3)*sin(#3)))/(\cy*\cy+\cz*\cz*sin(#3)*sin(#3)))), \r*sin(#3)*cos(-180-asin((\r*\cz*sin(#3) - \cy*sqrt(\cy*\cy+(\cz*\cz - \r*\r)*sin(#3)*sin(#3)))/(\cy*\cy+\cz*\cz*sin(#3)*sin(#3)))), -\plht)*(1,0) + cameraProjectY(\r*cos(#3)*cos(-180-asin((\r*\cz*sin(#3) - \cy*sqrt(\cy*\cy+(\cz*\cz - \r*\r)*sin(#3)*sin(#3)))/(\cy*\cy+\cz*\cz*sin(#3)*sin(#3)))), \r*sin(#3)*cos(-180-asin((\r*\cz*sin(#3) - \cy*sqrt(\cy*\cy+(\cz*\cz - \r*\r)*sin(#3)*sin(#3)))/(\cy*\cy+\cz*\cz*sin(#3)*sin(#3)))), -\plht)*(0,1)$) -- plot ({cameraProjectX(\r*cos(#3)*cos(\g), \r*sin(#3)*cos(\g), \r*sin(\g))},{cameraProjectY(\r*cos(#3)*cos(\g), \r*sin(#3)*cos(\g), \r*sin(\g))});
	\draw[shift={(0:#1)}, shift={(90:#2)}] ($cameraProjectX(\r*cos(#3)*cos(180-asin((\r*\cz*sin(#3) + \cy*sqrt(\cy*\cy+(\cz*\cz - \r*\r)*sin(#3)*sin(#3)))/(\cy*\cy+\cz*\cz*sin(#3)*sin(#3)))), \r*sin(#3)*cos(180-asin((\r*\cz*sin(#3) + \cy*sqrt(\cy*\cy+(\cz*\cz - \r*\r)*sin(#3)*sin(#3)))/(\cy*\cy+\cz*\cz*sin(#3)*sin(#3)))), \plht)*(1,0) + cameraProjectY(\r*cos(#3)*cos(180-asin((\r*\cz*sin(#3) + \cy*sqrt(\cy*\cy+(\cz*\cz - \r*\r)*sin(#3)*sin(#3)))/(\cy*\cy+\cz*\cz*sin(#3)*sin(#3)))), \r*sin(#3)*cos(180-asin((\r*\cz*sin(#3) + \cy*sqrt(\cy*\cy+(\cz*\cz - \r*\r)*sin(#3)*sin(#3)))/(\cy*\cy+\cz*\cz*sin(#3)*sin(#3)))), \plht)*(0,1)$) -- ($cameraProjectX(\plwd*cos(#3), \plwd*sin(#3), \plht)*(1,0) + cameraProjectY(\plwd*cos(#3), \plwd*sin(#3), \plht)*(0,1)$) -- ($cameraProjectX(\plwd*cos(#3), \plwd*sin(#3), -\plht)*(1,0) + cameraProjectY(\plwd*cos(#3), \plwd*sin(#3), -\plht)*(0,1)$) -- ($cameraProjectX(\r*cos(#3)*cos(-180-asin((\r*\cz*sin(#3) - \cy*sqrt(\cy*\cy+(\cz*\cz - \r*\r)*sin(#3)*sin(#3)))/(\cy*\cy+\cz*\cz*sin(#3)*sin(#3)))), \r*sin(#3)*cos(-180-asin((\r*\cz*sin(#3) - \cy*sqrt(\cy*\cy+(\cz*\cz - \r*\r)*sin(#3)*sin(#3)))/(\cy*\cy+\cz*\cz*sin(#3)*sin(#3)))), -\plht)*(1,0) + cameraProjectY(\r*cos(#3)*cos(-180-asin((\r*\cz*sin(#3) - \cy*sqrt(\cy*\cy+(\cz*\cz - \r*\r)*sin(#3)*sin(#3)))/(\cy*\cy+\cz*\cz*sin(#3)*sin(#3)))), \r*sin(#3)*cos(-180-asin((\r*\cz*sin(#3) - \cy*sqrt(\cy*\cy+(\cz*\cz - \r*\r)*sin(#3)*sin(#3)))/(\cy*\cy+\cz*\cz*sin(#3)*sin(#3)))), -\plht)*(0,1)$);

	\draw[shift={(0:#1)}, shift={(90:#2)}, domain=(-180-asin((\r*\cz*sin(#3) - \cy*sqrt(\cy*\cy+(\cz*\cz - \r*\r)*sin(#3)*sin(#3)))/(\cy*\cy+\cz*\cz*sin(#3)*sin(#3)))):(180-asin((\r*\cz*sin(#3) + \cy*sqrt(\cy*\cy+(\cz*\cz - \r*\r)*sin(#3)*sin(#3)))/(\cy*\cy+\cz*\cz*sin(#3)*sin(#3)))), smooth, variable = \g] plot ({cameraProjectX(\r*cos(#3)*cos(\g), \r*sin(#3)*cos(\g), \r*sin(\g))},{cameraProjectY(\r*cos(#3)*cos(\g), \r*sin(#3)*cos(\g), \r*sin(\g))});
	
	\filldraw[ball color=\spcolor, fill opacity = 0.4] (#1,#2) circle[radius=\r*sqrt((\cy*\cy + \cz*\cz)/(\cy*\cy+\cz*\cz-\r*\r))];
	
	\filldraw[color = \plcolor, fill opacity = 0.5, draw opacity = 0, shift={(0:#1)}, shift={(90:#2)}, domain=(180-asin((\r*\cz*sin(#3) + \cy*sqrt(\cy*\cy+(\cz*\cz - \r*\r)*sin(#3)*sin(#3)))/(\cy*\cy+\cz*\cz*sin(#3)*sin(#3)))):(180-asin((\r*\cz*sin(#3) - \cy*sqrt(\cy*\cy+(\cz*\cz - \r*\r)*sin(#3)*sin(#3)))/(\cy*\cy+\cz*\cz*sin(#3)*sin(#3)))), smooth, variable = \g] ($cameraProjectX(\r*cos(#3)*cos(180-asin((\r*\cz*sin(#3) + \cy*sqrt(\cy*\cy+(\cz*\cz - \r*\r)*sin(#3)*sin(#3)))/(\cy*\cy+\cz*\cz*sin(#3)*sin(#3)))), \r*sin(#3)*cos(180-asin((\r*\cz*sin(#3) + \cy*sqrt(\cy*\cy+(\cz*\cz - \r*\r)*sin(#3)*sin(#3)))/(\cy*\cy+\cz*\cz*sin(#3)*sin(#3)))), -\plht)*(1,0) + cameraProjectY(\r*cos(#3)*cos(180-asin((\r*\cz*sin(#3) + \cy*sqrt(\cy*\cy+(\cz*\cz - \r*\r)*sin(#3)*sin(#3)))/(\cy*\cy+\cz*\cz*sin(#3)*sin(#3)))), \r*sin(#3)*cos(180-asin((\r*\cz*sin(#3) + \cy*sqrt(\cy*\cy+(\cz*\cz - \r*\r)*sin(#3)*sin(#3)))/(\cy*\cy+\cz*\cz*sin(#3)*sin(#3)))), -\plht)*(0,1)$) -- ($cameraProjectX(-\plwd*cos(#3), -\plwd*sin(#3), -\plht)*(1,0) + cameraProjectY(-\plwd*cos(#3), -\plwd*sin(#3), -\plht)*(0,1)$) -- ($cameraProjectX(-\plwd*cos(#3), -\plwd*sin(#3), \plht)*(1,0) + cameraProjectY(-\plwd*cos(#3), -\plwd*sin(#3), \plht)*(0,1)$) -- ($cameraProjectX(\r*cos(#3)*cos(180-asin((\r*\cz*sin(#3) - \cy*sqrt(\cy*\cy+(\cz*\cz - \r*\r)*sin(#3)*sin(#3)))/(\cy*\cy+\cz*\cz*sin(#3)*sin(#3)))), \r*sin(#3)*cos(180-asin((\r*\cz*sin(#3) - \cy*sqrt(\cy*\cy+(\cz*\cz - \r*\r)*sin(#3)*sin(#3)))/(\cy*\cy+\cz*\cz*sin(#3)*sin(#3)))), \plht)*(1,0) + cameraProjectY(\r*cos(#3)*cos(180-asin((\r*\cz*sin(#3) - \cy*sqrt(\cy*\cy+(\cz*\cz - \r*\r)*sin(#3)*sin(#3)))/(\cy*\cy+\cz*\cz*sin(#3)*sin(#3)))), \r*sin(#3)*cos(180-asin((\r*\cz*sin(#3) - \cy*sqrt(\cy*\cy+(\cz*\cz - \r*\r)*sin(#3)*sin(#3)))/(\cy*\cy+\cz*\cz*sin(#3)*sin(#3)))), \plht)*(0,1)$) -- plot ({cameraProjectX(\r*cos(#3)*cos(\g), \r*sin(#3)*cos(\g), \r*sin(\g))},{cameraProjectY(\r*cos(#3)*cos(\g), \r*sin(#3)*cos(\g), \r*sin(\g))});
	\draw[shift={(0:#1)}, shift={(90:#2)}] ($cameraProjectX(\r*cos(#3)*cos(180-asin((\r*\cz*sin(#3) + \cy*sqrt(\cy*\cy+(\cz*\cz - \r*\r)*sin(#3)*sin(#3)))/(\cy*\cy+\cz*\cz*sin(#3)*sin(#3)))), \r*sin(#3)*cos(180-asin((\r*\cz*sin(#3) + \cy*sqrt(\cy*\cy+(\cz*\cz - \r*\r)*sin(#3)*sin(#3)))/(\cy*\cy+\cz*\cz*sin(#3)*sin(#3)))), -\plht)*(1,0) + cameraProjectY(\r*cos(#3)*cos(180-asin((\r*\cz*sin(#3) + \cy*sqrt(\cy*\cy+(\cz*\cz - \r*\r)*sin(#3)*sin(#3)))/(\cy*\cy+\cz*\cz*sin(#3)*sin(#3)))), \r*sin(#3)*cos(180-asin((\r*\cz*sin(#3) + \cy*sqrt(\cy*\cy+(\cz*\cz - \r*\r)*sin(#3)*sin(#3)))/(\cy*\cy+\cz*\cz*sin(#3)*sin(#3)))), -\plht)*(0,1)$) -- ($cameraProjectX(-\plwd*cos(#3), -\plwd*sin(#3), -\plht)*(1,0) + cameraProjectY(-\plwd*cos(#3), -\plwd*sin(#3), -\plht)*(0,1)$) -- ($cameraProjectX(-\plwd*cos(#3), -\plwd*sin(#3), \plht)*(1,0) + cameraProjectY(-\plwd*cos(#3), -\plwd*sin(#3), \plht)*(0,1)$) -- ($cameraProjectX(\r*cos(#3)*cos(180-asin((\r*\cz*sin(#3) - \cy*sqrt(\cy*\cy+(\cz*\cz - \r*\r)*sin(#3)*sin(#3)))/(\cy*\cy+\cz*\cz*sin(#3)*sin(#3)))), \r*sin(#3)*cos(180-asin((\r*\cz*sin(#3) - \cy*sqrt(\cy*\cy+(\cz*\cz - \r*\r)*sin(#3)*sin(#3)))/(\cy*\cy+\cz*\cz*sin(#3)*sin(#3)))), \plht)*(1,0) + cameraProjectY(\r*cos(#3)*cos(180-asin((\r*\cz*sin(#3) - \cy*sqrt(\cy*\cy+(\cz*\cz - \r*\r)*sin(#3)*sin(#3)))/(\cy*\cy+\cz*\cz*sin(#3)*sin(#3)))), \r*sin(#3)*cos(180-asin((\r*\cz*sin(#3) - \cy*sqrt(\cy*\cy+(\cz*\cz - \r*\r)*sin(#3)*sin(#3)))/(\cy*\cy+\cz*\cz*sin(#3)*sin(#3)))), \plht)*(0,1)$);
	
	\draw[shift={(0:#1)}, shift={(90:#2)}, domain=(180-asin((\r*\cz*sin(#3) + \cy*sqrt(\cy*\cy+(\cz*\cz - \r*\r)*sin(#3)*sin(#3)))/(\cy*\cy+\cz*\cz*sin(#3)*sin(#3)))):(180-asin((\r*\cz*sin(#3) - \cy*sqrt(\cy*\cy+(\cz*\cz - \r*\r)*sin(#3)*sin(#3)))/(\cy*\cy+\cz*\cz*sin(#3)*sin(#3)))), smooth, variable = \g] plot ({cameraProjectX(\r*cos(#3)*cos(\g), \r*sin(#3)*cos(\g), \r*sin(\g))},{cameraProjectY(\r*cos(#3)*cos(\g), \r*sin(#3)*cos(\g), \r*sin(\g))});
}

\section{Introduction}
Completely positive (CP) maps have played an important role in the long and extensive history of the problem of the formulation and characterization of the dynamics of open quantum systems \cite{Kraus:book,Breuer:book}. They have become a widespread tool, e.g., in quantum information science \cite{Nielsen:book}. Thus it is of interest to establish under which conditions CP maps can arise from a complete description of an open system that includes both the system and its environment or bath. To arrive at a map one first identifies an admissible set of initial system-bath states which is one-to-one with the corresponding set of system states via the partial trace.  Then one jointly and unitarily evolves the system and bath, and then traces out the bath. It is well known that if the set of admissible initial system-bath states is completely uncorrelated (product states) with a fixed bath state then this ``standard procedure" gives rise to a CP map description of the evolution of the system. The situation is far more complicated when the set of initial system-bath states contains correlated states.
The basic reason for this is that when correlations are present in the initial system-bath state, a clean separation between system and bath is no longer possible, and the ``standard procedure" alone no longer uniquely defines a map between the set of all initial and final system states.

Subsystem dynamical maps are uniquely defined by the joint unitary evolution of system and bath and the choice of the set of initial states and several studies have pointed out that entangled initial system-bath states can lead to non-CP maps \cite{Pechukas:1994,Pechukas:1995, Jordan:2004,Carteret:2008}. Rodriguez-Rosario \textit{et al.} highlighted the role of quantum discord \cite{Ollivier:2002} and showed that a CP map arises if the set of joint initial system-bath states is purely classically correlated, i.e., has vanishing quantum discord \cite{Rod:08}. Subsequently Shabani \& Lidar showed that, under certain additional constraints, the vanishing quantum discord condition is not just sufficient but also necessary for complete positivity \cite{SL,SL2}. {As will be discussed in Section \ref{sec:ShabaniLidar2009}, those additional constraints were not recognized at the time and weaken the conclusions of \cite{SL}.  More recently, Brodutch \emph{et al.} \cite{Brodutch:2013} and Buscemi \cite{Buscemi:2014} demonstrated that this connection between complete positivity and discord does not generalize to all cases.  Brodutch \emph{et al.} did so by offering a counterexample in the form of a set of initial system-bath states, almost all of which are discordant, which nevertheless exhibit completely positive subdynamics.  Buscemi followed this up by describing a general method for constructing examples yielding completely positive subdynamics, even though they may feature highly entangled states.  It remains an open problem, therefore, to fully elucidate the relationship between structural features of the set of initial system-bath states and the behavior of the resulting dynamics, including whether or not the dynamics are completely positive.  

We do not offer a complete solution to the problem in these pages.  Rather, it is our purpose here to establish a complete and consistent mathematical framework for the discussion and analysis of linear subsystem dynamics, including the question of complete positivity for correlated initial states.  Additionally, we develop the idea that, in the case of correlated initial states, there are several inequivalent definitions of complete positivity.  Physical, mathematical, and operational concerns may recommend one flavor of complete positivity over the others, as we discuss. Our analysis builds on the notion of what we call ``$\Sgp$-consistent operator spaces", that represent the set of admissible initial system-bath states.  $\Sgp$-consistency is necessary to ensure that initial system-bath states with the same reduced state on the system evolve under all admissible unitary operators to system-bath states with the same reduced state on the system, ensuring that the dynamical maps on the system are well-defined.

The paper is organized as follows.  In Section \ref{sec:subsysDynMaps}, we examine the minimal conditions necessary for the existence of well-defined subsystem dynamical maps, and formally define $\Sgp$-consistent subsets.  The more specific case of linear dynamical maps is considered in Section \ref{sec:GconsistLinSp}, where we formally define $\Sgp$-consistent subspaces, consider properties of the uniquely defined ``assignment maps'' associated with these spaces, describe the subsystem dynamical maps related to a $\Sgp$-consistent subspace, and operator sum representations of these maps.  Section \ref{sec:CPity} develops some definitions of complete positivity, discusses the physical motivation of these ideas, and extends these definitions to apply to ``assignment maps''.  We present in Section \ref{sec:Examples} a number of examples culled from the existing literature on completely positive dynamics and show how they may be expressed in the present framework.   Finally, a summary and discussion of open questions is offered in Section \ref{sec:Summary}.

\section{Subsystem Dynamical Maps}
\label{sec:subsysDynMaps}
The time-evolution of a quantum subsystem is often described in terms of dynamical maps that transform the reduced state of the system at time $t_{0}$ to the reduced state at time $t_{1}$.  However, the future state of a quantum system in contact with its environment depends upon not just the current state of the system, but the joint state of the system and environment, as well as the joint evolution.  For this reason, such dynamical maps are not generally well-defined: each possible reduced state of the system may arise from infinitely many distinct system-bath states, which after joint evolution, can yield infinitely many distinct possible reduced system states at a future time $t_{1}$.  We wish to describe a mathematical framework which is broad enough to contain the various dynamical map constructions of \cite{SL,SL2,Rod:08,Brodutch:2013, Buscemi:2014} and others.  It is closely related to the assignment map approach first described by Pechukas \cite{Pechukas:1994}, but is more general and places greater emphasis on the space of admissible initial system-bath states.  In order to build this framework involving dynamical maps for subsystem evolution, it is necessary to remove the indeterminacy of the final system state from the problem, requiring that we begin by making an assumption about the set of admissible initial system-bath states, which we describe presently.

To set the stage, let us briefly review the setting of open quantum systems.  For any Hilbert space $\HH$, let $\BL(\HH)$ denote the algebra of all bounded linear operators on $\HH$ and let $\mathrm{U}(\HH)$ denote the unitary group on $\HH$.  For any $U\in\mathrm{U}(\HH)$, the adjoint map $\Ad_{U}\in\BL(\BL(\HH))$ is the conjugation superoperator $A\mapsto UAU^{\dag}$.   Consider a system $S$ with a $d_{\textsc{s}}$-dimensional system Hilbert-space $\mathcal{H}_{\textsc{s}}$, a bath $B$ with a $d_{\textsc{b}}$-dimensional Hilbert space $\mathcal{H}_{\textsc{b}}$ and the joint system-bath Hilbert space $\mathcal{H}_{\textsc{s}}\otimes \mathcal{H}_{\textsc{b}}$.  Let $\DS\subset\BHS$, $\DB\subset\BHB$, and $\DSB\subset\BHSB$ denote the convex sets of density matrices on $\HS$, $\HB$, and $\HSB$, respectively.  The joint system-bath state $\rho _{\textsc{sb}}(t) \in \DSB$ evolves under a joint time-dependent unitary propagator $U(t)$ as $\Ad_{U(t)}[\rhoSB(0)]=\rho _{\textsc{sb}}(t)$.  The reduced system and bath states are given via the partial trace operation by $\rhoS(t)=\TrB[\rhoSB(t)] \in \DS$ and $\rhoB(t)=\Tr_{\textsc{s}}[\rhoSB(t)] \in \DB$, respectively. The standard prescription for the system subdynamics is thus given by the following quantum dynamical process (QDP), acting on the initial system-bath state $\rhoSB(0)$:
\begin{equation}
	\rhoS(t)=\TrB[U(t)\rhoSB(0)U^{\dag }(t)].
	\label{eq:rho_S(t)}
\end{equation}
It is important to note that, in contrast to some authors (e.g., \cite{Alicki:1995}), the map we seek is one that describes the time-evolution of the post-preparation state of the system.  Evolution from an idealized state through experimental realization (i.e., state preparation) ending with unitary evolution of the system-bath state should be modeled with an additional ``preparation'' map precomposed with the evolution map described herein.

Consider some designated set of admissible initial states $\AS\subset\DSB$.  We shall later assume that $\AS$ is convex, but for the time being we do not restrict its generality.
\begin{mydefinition}[$\AS$-dynamical map]
Given $U\in\USB$, a subsystem $\AS$-dynamical map $\tau_{U}^{\AS}:\TrB\AS\to\DS$ is any map that satisfies $\tau_{U}^{\AS}(\TrB\rho) = \TrB(U\rho U^{\dag})$ for all $\rho\in\AS$, i.e., that makes the diagram in Figure \ref{fig:admissibleStates} commute.  
\end{mydefinition}
\begin{figure}
	\tikzimageswap{\includegraphics{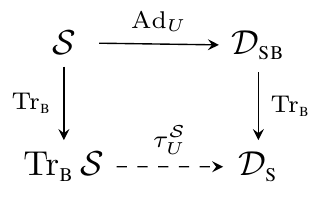}}{
	\tikzsetnextfilename{figure1}
	\begin{tikzpicture}
		\matrix (m) [matrix of math nodes, ampersand replacement=\&, row sep=2em,column sep=3em,minimum width=2em]
		{
			\AS \& \DSB \\
	  		\TrB\AS \& \DS\\ 
	  	};
	  	\path[-stealth,font=\scriptsize]
		(m-1-1) edge node [above] {$\Ad_{U}$} (m-1-2)
		(m-1-1) edge node [left] {$\TrB$} (m-2-1)
		(m-2-1) edge [dashed] node [above] {$\tau_{U}^{\AS}$} (m-2-2)
		(m-1-2) edge node [right] {$\TrB$} (m-2-2);		
	\end{tikzpicture}}
	\caption{The set of admissible initial system-bath states $\AS\subset\DSB$ must be chosen such that a well-defined subsystem dynamical map $\tau_{U}^{\AS}:\TrB\AS\to\DS$ exists which makes the diagram commute, i.e., which satisfies $\tau_{U}^{\AS}(\TrB \rho) = \TrB(U\rho U^{\dag})$ for all $\rho\in\AS$.  This requirement defines the concept of a $U$-consistent subset \cite{diagram-comment}.}
	\label{fig:admissibleStates}
\end{figure}
\noindent Such a map can only be well-defined if, whenever $\rho_{1},\rho_{2}\in\AS$ and $\TrB\rho_{1} = \TrB\rho_{2}$, it follows that $\TrB(U\rho_{1} U^{\dag}) =\TrB(U\rho_{2} U^{\dag})$.  This defines a necessary property of $\AS$ that we call \emph{$U$-consistency}.

We typically want to define not just a single dynamical map, but a family of them based on a subsemigroup of unitary system-bath evolution operators (``semi" since we will set $t\geq 0$). 
This gives rise to the notion of $\Sgp$-consistency.
\begin{mydefinition}[$\Sgp$-Consistent Subset]
	Let $\Sgp\subset\USB$ be a subsemigroup of the unitary group acting on the Hilbert space $\HSB$.  A subset of system-bath states $\AS\subset\DSB$ will be called $\Sgp$-consistent if it is $U$-consistent for all $U\in \Sgp$, i.e., if, whenever $\rho_{1},\rho_{2}\in\AS$ are such that $\TrB\rho_{1} = \TrB\rho_{2}$, $\TrB(U\rho_{1}U^{\dag}) = \TrB(U\rho_{2}U^{\dag})$ for all $U\in \Sgp$.
\end{mydefinition}
\noindent 
$\Sgp$ represents the semigroup of ``allowed'' unitary evolutions of system and bath, in other words, the set of unitary evolutions, closed under compositions, for which the subsystem dynamical maps are of interest.  For example, if the system and bath will only evolve according to a single fixed time-independent Hamiltonian $H$, then $\Sgp$ is typically the one-parameter subsemigroup $\Sgp = \{e^{-itH}\;:\; t\geq 0\}$.  If the system and/or bath are subject to control, then $\Sgp$ may be a larger subsemigroup representing all unitary operators that may be generated within the control scheme.  Generally, the larger the semigroup $\Sgp$, the more restrictions $\Sgp$-consistency places on the subset of admissible initial states $\AS$ and the smaller such a subset must be.  Indeed, when $\Sgp = \USB$, the condition becomes that $\TrB$ must describe a one-to-one correspondence between $\AS$ and $\TrB\AS$, i.e., for each system state $\rhoS\in\DS$, there exists \emph{at most}\footnote{Not every state in $D_{\textsc{s}}$ needs to be covered by a state in $\AS$.  Those that are not covered are inadmissible initial system states, lying outside the domain of $\tau_{U}^{\AS}$.  See also the comments at the end of Section \ref{sec:constrainedDomains}.} one state $\rhoSB\in\AS$ such that $\TrB\rhoSB = \rhoS$ \cite{G-consistency-comment}. 
\begin{myexample}
\label{ex:DiffBathStates}
As a simple example that not all subsets $\AS\subset\DSB$ are $\USB$-consistent, consider a system and bath, each one qubit, and states $\rhoS\in\DS$ and $\rhoB^{(1)}\neq\rhoB^{(2)}\in\DB$.  Let $U\in\USB$ be the swap operator $U(\ket{\psi}\otimes\ket{\phi}) = \ket{\phi}\otimes\ket{\psi}$.  Then the set $\AS := \{\rhoSB^{(1)} = \rhoS\otimes\rhoB^{(1)}, \rhoSB^{(2)} = \rhoS\otimes\rhoB^{(2)}\}$ is not $U$-consistent since $\TrB\rhoSB^{(1)} = \TrB\rhoSB^{(2)} = \rhoS$, but $\TrB(U\rhoSB^{(1)}U^{\dag}) = \rhoB^{(1)}\neq\rhoB^{(2)} = \TrB(U\rhoSB^{(2)}U^{\dag})$.
\end{myexample}

\begin{myexample}
\label{ex:StelBu}
\v{S}telmachovi\v{c} and Bu\v{z}ek \cite{Stelmachovic:2001} offered another example which  demonstrates that not all subsets $\AS\subset\DSB$ are $\USB$-consistent, i.e., that there exist initial states $\rho_{1},\rho_{2}\in\DSB$ and $U\in\USB$ such that $\TrB\rho_{1} = \TrB\rho_{2}$, but $\TrB(U\rho_{1}U^{\dag})\neq\TrB(U\rho_{2}U^{\dag})$.  Specifically, they considered a system and bath comprising one qubit each, such that
\begin{subequations}
\begin{align}
	\rho_{1} & = |\alpha|^{2}\ketbra{00}{00} + |\beta|^{2}\ketbra{11}{11}\\
	\rho_{2} & = (\alpha\ket{00} + \beta\ket{11})(\alpha^{*}\bra{00} + \beta^{*}\bra{11})\\
	U & = -i(\ketbra{1}{1}\otimes\sigma_{x} + \ketbra{0}{0}\otimes \identity) \simeq \textsc{cnot},
\end{align}
\end{subequations}
which satisfy 
\begin{subequations}
\begin{align}
	\TrB\rho_{1} & = |\alpha|^{2}\ketbra{0}{0} + |\beta|^{2}\ketbra{1}{1} = \TrB\rho_{2}\\
	\TrB(U\rho_{1}U^{\dag}) & = |\alpha|^{2}\ketbra{0}{0} + |\beta|^{2}\ketbra{1}{1}\\
	\TrB(U\rho_{2}U^{\dag}) & = (\alpha\ket{0} + \beta\ket{1})(\alpha^{*}\bra{0} + \beta^{*}\bra{1}),
\end{align}
\end{subequations}
so that $\TrB(U\rho_{1}U^{\dag})\neq \TrB(U\rho_{2}U^{\dag})$.  It follows that if $\textsc{cnot}\in\Sgp$, then no $\Sgp$-consistent subset $\AS\subset\DSB$ may contain both $\rho_{1}$ and $\rho_{2}$.
\end{myexample}

The specification of a set $\AS\subset\DSB$ of admissible initial states may be thought of as a ``promise'' that the initial system-bath state will always lie in $\AS$.  The constraint that $\AS$ must be $\Sgp$-consistent can impose heavy restrictions on this set, often forcing $\AS$ to be very small relative to $\DSB$. It will be instructive to consider the consequences of constraining to a $\Sgp$-consistent subset $\AS$ in order to understand the properties of the resulting dynamical map and to place earlier studies in their proper context.

The subsystem maps $\tau_{U}^{\AS}$ defined as in this section may exhibit a wide variety of behavior (including non-linearity) that depends on the choice of $\AS$.  It may be noted however that all maps $\tau_{U}^{\AS}$ defined in this way (as well as the linear subsystem dynamical maps $\Psi_{U}^{\CS}$ defined in Section \ref{sec:GconsistLinSp}) share the property of being trace-preserving.  That is a simple consequence of the fact that the system-bath evolution is assumed to the trace-preserving (due to unitarity) and that the partial trace is trace-preserving.  Issues associated with population loss therefore never arise in this formalism.

\section{Linear Dynamical Maps}
\label{sec:GconsistLinSp}

\subsection{\texorpdfstring{$\Sgp$}{G}-consistent linear subspaces}
Properties of the set of admissible initial states $\AS$ are closely related to properties of the resulting dynamical maps $\tau_{U}^{\AS}$, and it is reasonable to wonder if there are either necessary or quite common properties of subsystem dynamics that should be incorporated into the framework we are developing.  One such property, nearly ubiquitous in the open quantum system literature (see \cite{Alicki:1995} for a notable exception we discuss in Section~\ref{sec:Alicki}), is that of convex-linearity of $\tau_{U}^{\AS}$, i.e., the property that for any $\rhoS,\sigma_{\textsc{s}}\in\TrB\AS$, and any $\alpha\in[0,1]$, 
\begin{equation}
	\tau_{U}^{\AS}(\alpha\rhoS + (1-\alpha)\sigma_{\textsc{s}}) = \alpha\tau_{U}^{\AS}(\rhoS) + (1-\alpha)\tau_{U}^{\AS}(\sigma_{\textsc{s}}).
\end{equation}
Clearly, this statement about $\tau_{U}^{\AS}$ requires that $\TrB\AS$ be a convex set.  We will go a bit further and assume that $\AS$ itself is a convex subset of $\BHSB$.  Any subset $\AS\subset \DSB$ may be uniquely extended to the $\mathbb{C}$-linear subspace 
\begin{equation}
	\CS = \Span_{\mathbb{C}}\AS\subset\BHSB
\end{equation}
and the maps $\TrB\big|_{\AS}$, and $\Ad_{U}\big|_{\AS}$ may be likewise uniquely extended by linearity to $\TrB\big|_{\CS}$, and $\Ad_{U}\big|_{\CS}$.  By assuming that $\AS$ is convex and $\Sgp$-consistent, we may similarly extend the maps $\tau_{U}^{\AS}:\TrB\AS\to\DS$ to $\mathbb{C}$-linear maps $\Psi_{U}^{\CS}:\TrB\CS\to\BHS$ for all $U\in\Sgp$.  It should be emphasized that, while $\tau_{U}^{\AS}$ is defined only on states, the map $\Psi_{U}^{\CS}$ is defined on the linear operator space $\TrB\CS\subset\BHS$ which contains both states, and non-state operators.  

It is readily verified that the subspace $\CS$ constructed as the complex span of a convex, $\Sgp$-consistent $\AS$ exhibits the following properties:
\begin{enumerate}
	\item $\CS\subset\BHSB$ is a $\mathbb{C}$-linear subspace.
	\item $\CS$ is spanned by states, i.e., $\Span_{\mathbb{C}}(\DSB\cap\CS) = \CS$.\label{item:positiveSpan} 
	\item $\CS$ is $\Sgp$-consistent, i.e., if $X,Y\in\CS$ are such that $\TrB X = \TrB Y$ and if $U\in\Sgp$, then $\TrB(UXU^{\dag}) =\TrB(UYU^{\dag})$. \label{item:GconsLinSp}
\end{enumerate} 
Any $\mathbb{C}$-linear subspace $\CS\subset\BHSB$ will be called a \emph{$\Sgp$-consistent subspace} if it satisfies these three properties. Figure~\ref{fig:G-consistent} illustrates how a $\Sgp$-consistent subspace relates to $\DSB$.

Note that it is implied by $\mathbb{C}$-linearity of $\CS$ and property \ref{item:positiveSpan} that $\CS$ is self-adjoint, i.e., $X\in\CS$ implies $X^{\dag}\in\CS$. Note further that any $\mathbb{C}$-linear subspace $\CS\subset\BHSB$ which is self-adjoint and spanned by states is always a $\Sgp$-consistent subspace when $\Sgp$ is the trivial group $\Sgp=\{\identity\}$.  

\begin{figure}
	\scalebox{0.8}{
	\tikzimageswap{\includegraphics{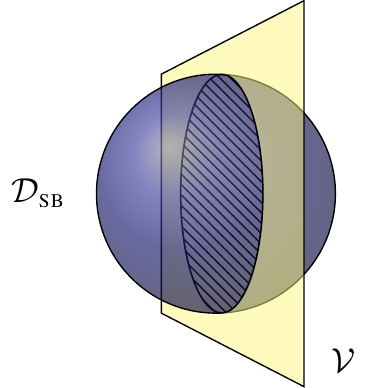}}{
	\def\cy{8} 
	\def\cz{0}
	\def\plcolor{yellow!50}
	\def\spcolor{blue!80}
	\tikzsetnextfilename{figure2}
	\begin{tikzpicture}[declare function={
			cameraProjectX(\xx,\yy,\zz)=\xx*(\cy*\cy + \cz*\cz)/(\cy*(\cy-\yy) + \cz*(\cz-\zz)); 
			cameraProjectY(\xx,\yy,\zz)=(\cy*\zz-\cz*\yy)*(sqrt(\cy*\cy + \cz*\cz))/(\cy*(\cy-\yy) + \cz*(\cz-\zz));
		}]

		\rotatedSpherePlaneRight{1.5}{6.0}{-110}{1.2}
		\draw (2.8,4.3) node {$\CS$};
		\draw (-0.3,6.0) node {$\DSB$};
		
	\end{tikzpicture}
	}}
	\caption{Schematic representation of a $\Sgp$-consistent subspace $\CS$ and its relationship to $\DSB$, the convex set of all system-bath density matrices, represented by the ball.  The intersection $\CS\cap\DSB$, indicated by the ruled area, is such that $\CS = \Span_{\mathbb{C}}(\CS\cap\DSB)$.}
	\label{fig:G-consistent}
\end{figure}

Within any $\mathbb{C}$-linear subspace $\CS\subset\BHSB$, we may identify a further subspace $\CS_{0}\subset\CS$ by 
	\begin{equation}
		\CS_{0}:= \ker\big(\TrB\big|_{\CS}\big) = \{X\in\CS \;:\; \TrB(X) = 0\}.
	\end{equation}
Then property \ref{item:GconsLinSp}, $\Sgp$-consistency, is equivalent to the property that $\Sgp\cdot\CS_{0}\subset \ker \TrB$, i.e., for any $X\in\CS_{0}$ and $U\in \Sgp$, $\TrB(UXU^{\dag})=0$.  
In other words, differences in states with the same reduced state cannot play any role in the final reduced state after evolution: $\TrB(\rho-\sigma) = 0$ for $\rho,\sigma\in\CS\cap\DSB$ must imply $\TrB[U(\rho-\sigma)U^{\dag}] = 0$ for all $U\in\Sgp$.
The kernel of this idea goes back at least to \cite{SL2} (section II.C), and possibly earlier.

\subsection{Is dynamical map composition meaningful?}
\label{sec:dynSemigroups}
It should be noted that, given some system-bath Hilbert space $\HSB$, $U,U'\in\USB$, and $U$-consistent subspace $\CS\subset\BHSB$ and $U'$-consistent subspace $\CS'\subset\BHSB$, we cannot in general compose the dynamical subsystem maps $\Psi_{U}^{\CS}$ and $\Psi_{U'}^{\CS'}$ to form $\Psi_{U'}^{\CS'}\circ\Psi_{U}^{\CS}$.  The reason is that $\Ad_{U}\CS$ need not lie in $\CS'$, so that the image of $\Psi_{U}^{\CS}$ need not lie in the domain of $\Psi_{U'}^{\CS'}$.  As a result, the composition $\Psi_{U'}^{\CS'}\circ\Psi_{U}^{\CS}$ may be meaningless.  For example, for a given $\Sgp$-consistent subspace $\CS$, the dynamical maps $\{\Psi_{U}^{\CS}\;:\;U\in\Sgp\}$ \emph{do not} comprise a semigroup in general.  Care must be taken when analysing sequences of these dynamical maps that the compositions are meaningful.

\subsection{Assignment maps}
Given a $\Sgp$-consistent subspace $\CS$, consider the quotient space $\CS/\CS_{0}$ in which each element corresponds to an affine subspace of the form $X+\CS_{0}\subset \CS$ for some $X\in\CS$.   Because of linearity, property \ref{item:GconsLinSp} above is equivalent to the statement that if $Y,Z\in\CS$ are such that $\TrB Y = \TrB Z$, then $\TrB(UYU^{\dag}) = \TrB(UZU^{\dag})$ for all $U\in \Sgp$.   It follows that the linear maps $\TrB:\CS\to\TrB\CS$ and $\Ad_{U}:\CS\to\BHSB$ for each $U\in \Sgp$ uniquely define linear maps $\TrB:\CS/\CS_{0}\to\TrB\CS$ and $\Ad_{U}:\CS/\CS_{0}\to\BHSB/\ker\TrB$.  Moreover, the map $\TrB:\CS/\CS_{0}\to\TrB\CS$ is a one-to-one correspondence, and so may be inverted, yielding the ``assignment map''.

\begin{mydefinition}[Assignment Map]
The assignment map $\AM_{\CS}:\TrB\CS\to\CS/\CS_{0}$ is defined by
\begin{equation}
	\AM_{\CS}(X) = \big(\TrB\big|_{\CS}\big)^{-1}(X) = \{Y\in\CS\;:\;\TrB Y = X\}.
\label{eq:AM-def}
\end{equation}
for any $X\in\TrB\CS$.  For each $X\in\TrB\CS$, the output $\AM_{\CS}(X)$ may be thought of either as an affine subspace of $\CS$, or as a point in the vector space $\CS/\CS_{0}$.
\end{mydefinition}
This definition of the assignment map is a generalization of the original definition introduced by Pechukas \cite{Pechukas:1994}.  A Pechukas-like assignment map (which assigns to each operator in $\TrB\CS$ a unique operator in $\CS$, rather than an affine subspace) is recovered if and only if the $\Sgp$-consistent subspace $\CS$ is $\USB$-consistent, as demonstrated in the following lemma.
\begin{mylemma}
	\label{lem:USBconsistency}
	Let $\CS\subset\BHSB$ be a $\mathbb{C}$-linear, self-adjoint subspace which is spanned by states, i.e., a $\{\identity\}$-consistent subspace.  $\CS$ is $\USB$-consistent if and only if $\TrB\big|_{\CS}$ is injective, i.e., $\CS_{0}:=\ker\big(\TrB\big|_{\CS}\big) = \{0\}$.  In this case, $\CS/\CS_{0} = \CS$, so that the assignment map carries each operator in $\TrB\CS$ to a unique operator in $\CS$, as in Pechukas' original definition of the assignment map \cite{Pechukas:1994}.
\end{mylemma}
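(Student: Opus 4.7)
My plan is to prove the two directions separately. The forward implication is tautological: if $\TrB|_{\CS}$ is injective and $X,Y\in\CS$ satisfy $\TrB X = \TrB Y$, then $X=Y$, so $\TrB(UXU^{\dag}) = \TrB(UYU^{\dag})$ holds for every $U\in\USB$, giving $\USB$-consistency. The claim that $\AM_{\CS}$ reduces to Pechukas' assignment map in this case is then immediate, since $\CS/\CS_{0} = \CS/\{0\} = \CS$ and $\AM_{\CS} = (\TrB|_{\CS})^{-1}$ by definition.

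All of the substance lies in the reverse direction. Assume $\USB$-consistency and fix $Z\in\CS_{0}$; by $\mathbb{C}$-linearity, applying the consistency condition to the pair $(Z,0)$ yields $\TrB(UZU^{\dag}) = 0$ for every $U\in\USB$. Dualizing via the Hilbert-Schmidt inner product, this is equivalent to
\begin{equation}
	\Tr\!\bigl(U^{\dag}(X_{\textsc{s}}\otimes\identity_{\textsc{b}})U\cdot Z\bigr) = 0
\end{equation}
for all $X_{\textsc{s}}\in\BHS$ and all $U\in\USB$, so $Z$ is Hilbert-Schmidt orthogonal to the subspace
\begin{equation}
	\mathcal{M} := \Span_{\mathbb{C}}\bigl\{U^{\dag}(X_{\textsc{s}}\otimes\identity_{\textsc{b}})U : X_{\textsc{s}}\in\BHS,\; U\in\USB\bigr\}.
\end{equation}
The lemma thus reduces to a pure linear-algebra question: show $\mathcal{M} = \BHSB$, which would immediately force $Z = 0$ and hence $\CS_{0} = \{0\}$.

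To establish $\mathcal{M} = \BHSB$, I would first observe that $\mathcal{M}$ is invariant under the adjoint action $Y\mapsto W^{\dag}YW$ of $\USB$, since $W^{\dag}U^{\dag}(X\otimes\identity)UW = (UW)^{\dag}(X\otimes\identity)(UW)\in\mathcal{M}$. The choices $X_{\textsc{s}}=\identity_{\textsc{s}}$, $U=\identity$ place $\identity_{\textsc{sb}}$ in $\mathcal{M}$; any nonzero traceless $X_{\textsc{s}}\in\BHS$ (which exists whenever $d_{\textsc{s}}\geq 2$) places the nonzero traceless operator $X_{\textsc{s}}\otimes\identity_{\textsc{b}}$ in $\mathcal{M}$ as well. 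The adjoint representation of $\USB$ on $\BHSB$ decomposes into exactly two irreducibles: the one-dimensional trivial summand $\mathbb{C}\identity_{\textsc{sb}}$ and the summand of traceless operators. Since $\mathcal{M}$ is an invariant subspace meeting both irreducible components nontrivially, it must coincide with $\BHSB$.

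The main obstacle is the appeal to irreducibility of the traceless part of the adjoint representation. Although this is a classical fact, if one prefers a self-contained argument it can be replaced by letting $X_{\textsc{s}}$ range over matrix units $\ketbra{i}{j}$ and using concrete unitaries built from two-level rotations and basis permutations on $\HSB$ to exhibit every matrix unit of $\BHSB$ explicitly inside $\mathcal{M}$. I would also flag that the statement tacitly assumes $d_{\textsc{s}}\geq 2$, since for $d_{\textsc{s}} = 1$ the partial trace is just the full trace and $\USB$-consistency reduces to the tautology $\Tr(UZU^{\dag}) = \Tr(Z)$, which holds automatically without forcing injectivity.
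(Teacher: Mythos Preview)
Your proof is correct and is essentially the paper's argument: the easy direction is identical, and for the hard direction both invoke irreducibility of the adjoint representation of $\USB$ on $\mathrm{sl}(\HSB)$, with the paper arguing directly that the $\Ad$-invariant subspace $\Span_{\mathbb{C}}[\USB\cdot\CS_{0}]\subset\ker\TrB\subsetneq\mathrm{sl}(\HSB)$ must be $\{0\}$, while you pass to the Hilbert--Schmidt dual and show $\mathcal{M}=\BHSB$. The paper also flags the tacit hypothesis $d_{\textsc{s}}>1$ that you noted.
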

\begin{proof}
	First, observe that if $\CS_{0} = \{0\}$, then $\Sgp\cdot\CS_{0} = \CS_{0}\subset\ker\TrB$ for any subsemigroup $\Sgp\subset\USB$.  Therefore $\CS$ is $\Sgp$-consistent for any $\Sgp\subset\USB$, and, in particular, is $\USB$-consistent.  On the other hand, if $\CS$ is $\USB$-consistent, then $\Span_{\mathbb{C}}\big[\USB\cdot\CS_{0}\big]\subset\ker\TrB$ is an invariant subspace of the adjoint representation of $\USB$ on $\mathrm{sl}(\HS\otimes\HB) = \ker\Tr$ [the zero trace operators in $\BHSB$].  Since $\ker\TrB$ is a proper subspace of $\ker\Tr$ (assuming $\dim\HS>1$), this invariant subspace generated by $\CS_{0}$ cannot be all of $\mathrm{sl}(\HS\otimes\HB)$.  The ``only if'' part of the lemma then follows from the irreducibility  of the adjoint representation of $\USB$ on $\mathrm{sl}(\HS\otimes\HB)$ \cite[\textsection 2.4.4]{Goodman:2009}), which implies that the invariant subspace must be $\{0\}$, and therefore $\CS_{0} = \{0\}$. 
\end{proof}

\subsection{How constrained are the domains of dynamical maps?}
\label{sec:constrainedDomains}

Some authors have expressed a desire to keep $\TrB\CS = \BHS$, so that the domain of the dynamical maps $\Psi_{U}^{\CS}$ is unconstrained. Lemma \ref{lem:USBconsistency} shows that, when $\Sgp$ is all of $\USB$, any $\Sgp$-consistent subspace must be severely constrained to have dimension no higher than $\BHS$.  In Lemma \ref{lem:localUconsistency} we demonstrate that, if $\Sgp$ contains non-local (entangling) operators, then any $\Sgp$-consistent subspace $\CS$ must necessarily be a proper subspace of $\BHSB$, and therefore is constrained in some way.  
Specifically, Lemma \ref{lem:localUconsistency} says that if $\Sgp$ contains non-local unitaries, and $\CS = \BHSB$, then $\CS_{0} = \ker\TrB$ is such that $\Sgp\cdot\CS_{0}\not\subset \ker\TrB$, so that $\CS$ is not $\Sgp$-consistent.  Thus, the only $\Sgp$-consistent subspaces must be proper subspaces of $\BHSB$.

\begin{mylemma}
	\label{lem:localUconsistency}
	The full kernel 
	\begin{equation}
		\ker\TrB := \{X\in\BHSB\;:\; \TrB X = 0\}
	\end{equation}
	is $\Sgp$-invariant if and only if $\Sgp$ is a subsemigroup of the group $\US\otimes\UB$ of local unitary operators.  
\end{mylemma}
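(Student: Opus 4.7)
For the easy ``if'' direction I would simply invoke the identity $\TrB[(V \otimes W) X (V^\dagger \otimes W^\dagger)] = V(\TrB X) V^\dagger$ valid for all $X \in \BHSB$, $V \in \US$, $W \in \UB$; this immediately sends $\ker\TrB$ into itself.

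For the ``only if'' direction, the plan is to fix an arbitrary $U \in \Sgp$ and show that $U$-invariance of $\ker\TrB$ alone forces $U \in \US \otimes \UB$; the statement for all of $\Sgp$ follows. The key observation is that $U$-invariance is equivalent to the linear map $\Phi_U := \TrB \circ \Ad_U : \BHSB \to \BHS$ vanishing on $\ker\TrB$, so it descends along the surjection $\TrB$ to a unique linear map $\Psi_U : \BHS \to \BHS$ satisfying $\TrB \circ \Ad_U = \Psi_U \circ \TrB$. Passing to Hilbert-Schmidt adjoints turns this into
\begin{equation}
U^\dagger (A \otimes \identity_{B}) U \;=\; \Psi_U^*(A) \otimes \identity_{B} \qquad \text{for every } A \in \BHS ,
\end{equation}
so $\Ad_{U^\dagger}$ preserves the subalgebra $\BHS \otimes \identity_{B} \subset \BHSB$, acting on it through $\Psi_U^*$ on the first tensor factor.

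I expect this recasting to be the real work of the proof; everything after it is textbook algebra. Applying $\Ad_{U^\dagger}$ to a product $(A \otimes \identity_{B})(A' \otimes \identity_{B}) = AA' \otimes \identity_{B}$ shows that $\Psi_U^*$ is multiplicative, while unitality and $*$-preservation are immediate from the displayed equation. Thus $\Psi_U^*$ is a unital $*$-homomorphism $\BHS \to \BHS$; since $\BHS \cong M_{d_{\textsc{s}}}(\mathbb{C})$ is a simple algebra, $\Psi_U^*$ is injective and hence, by a dimension count, an automorphism. The Skolem--Noether theorem (equivalently, the fact that all irreducible representations of $M_n(\mathbb{C})$ are equivalent to the defining one) then produces a unitary $V \in \US$ with $\Psi_U^*(A) = V^\dagger A V$.

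Substituting back gives $U^\dagger (A \otimes \identity_{B}) U = (V^\dagger \otimes \identity_{B})(A \otimes \identity_{B})(V \otimes \identity_{B})$ for every $A$, which is precisely the assertion that $U (V^\dagger \otimes \identity_{B})$ commutes with every element of $\BHS \otimes \identity_{B}$ and so lies in its commutant $\identity_{S} \otimes \BHB$. Writing $U(V^\dagger \otimes \identity_{B}) = \identity_{S} \otimes W$ with $W \in \BHB$, the unitarity of the left-hand side forces $W$ to be unitary, yielding $U = V \otimes W \in \US \otimes \UB$, as required. The main obstacle, conceptually, is recognizing that $U$-invariance of $\ker\TrB$ is dual to $\Ad_{U^\dagger}$ preserving the subalgebra $\BHS \otimes \identity_{B}$; once that perspective is in place, the remainder reduces to the classical fact that automorphisms of full matrix algebras are inner.
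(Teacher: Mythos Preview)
Your proof is correct, and it shares with the paper's argument the central reduction: invariance of $\ker\TrB$ under $\Ad_U$ is equivalent, by Hilbert--Schmidt duality, to $\Ad_{U^\dagger}$ preserving the subalgebra $\BHS\otimes\identity_{\textsc{b}}$. The paper obtains this by noting directly that $\BHS\otimes\identity_{\textsc{b}}$ is the orthogonal complement of $\ker\TrB$; you reach the same conclusion by descending $\TrB\circ\Ad_U$ through $\TrB$ and taking adjoints, which is an equivalent packaging of the same linear algebra (the range of $\TrB^{*}$ is exactly that orthogonal complement).

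Where the two proofs diverge is in closing the argument from ``$\Ad_{U^\dagger}$ preserves $\BHS\otimes\identity_{\textsc{b}}$'' to ``$U\in\US\otimes\UB$''. The paper observes that this places $U$ in the normalizer of $\US\otimes\identity_{\textsc{b}}$ and then cites an external reference for the fact that this normalizer equals $\US\otimes\UB$. You instead give a self-contained derivation: the induced map $\Psi_U^{*}$ on $\BHS$ is a unital $*$-automorphism of a full matrix algebra, hence inner by Skolem--Noether, and then a commutant computation peels off the $\HB$ factor. Your route avoids the external citation and makes explicit why the normalizer result holds; the paper's route is terser but relies on a black box. Both are valid and comparably short.
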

(Essentially this same question, posed and answered in different language, was the subject of \cite{Hayashi:2003}.)
\begin{proof}
	First, observe that ``if" is trivial, i.e., if $U\in\US\otimes\UB$, then $\Ad_{U}\ker\TrB \subset \ker\TrB$ since, if $U=U_{\textsc{s}}\otimes U_{\textsc{b}}$, then $\TrB(UXU^{\dag}) = U_{\textsc{s}}[\TrB X]U_{\textsc{s}}^{\dag} = 0$ for any $X\in\ker\TrB$.  
	
	To prove ``only if", i.e., if $\Ad_{U}\ker\TrB \subset \ker\TrB$, then $U\in\US\otimes\UB$, note first that $\BHS\otimes\identity$ is the orthogonal complement to $\ker\TrB$ in the Hilbert-Schmidt geometry.  This may be seen from the fact that $\BHSB = \BHS\otimes\identity \oplus \ker\TrB$ since any $X\in\BHSB$ may be uniquely decomposed as $X = \TrB(X)\otimes\identity/d_{\textsc{b}} + X_{\ker}$, where $X_{\ker} = X - \TrB(X)\otimes\identity/d_{\textsc{b}}\in \ker\TrB$, and $\BHS\otimes\identity$ and $\ker\TrB$ are orthogonal subspaces [if $A\in\BHS$ and $X\in\ker\TrB$, then $\langle A\otimes\identity, X\rangle_{\mathrm{HS}} = \langle A,\TrB X\rangle_{\mathrm{HS}} = 0$].  Now, $U\in\USB$ satisfies $\Ad_{U}\ker\TrB \subset \ker\TrB$ if and only if $0 = \langle \Ad_{U}X, A\otimes\identity\rangle_{\mathrm{HS}} = \langle X, \Ad_{U^{\dag}}(A\otimes\identity)\rangle_{\mathrm{HS}}$ for all $X\in\ker\TrB$ and $A\in\BHS$, i.e., if and only if $\Ad_{U^{\dag}}[\BHS\otimes\identity]\subset\BHS\otimes\identity$.  In other words, the condition is that, for any $A\in\BHS$, there exists $B\in\BHS$ such that $U^{\dag}(A\otimes\identity)U = B\otimes\identity$.  This condition implies, in particular, that $U$ belongs to the normalizer of $\US\otimes\identity$, which is $\US\otimes\UB$ \cite{Grace:2010}.
\end{proof}

It follows from Lemma \ref{lem:localUconsistency} that any $\mathbb{C}$-linear subspace $\CS\subset\BHSB$ which is spanned by states (i.e., a $\{\identity\}$-consistent subspace) is a $\US\otimes\UB$-consistent subspace.  This is because, for any such $\CS$, $\CS_{0} =\ker\big(\TrB|_{\CS}\big)\subset\ker\TrB$ is mapped by $\US\otimes\UB$ into $\ker\TrB$ (by Lemma \ref{lem:localUconsistency}), which is the condition for $\US\otimes\UB$-consistency.  In particular, for any $\Sgp\subset\US\otimes\UB$, $\CS = \BHSB$ is a valid $\Sgp$-consistent subspace and always gives rise to completely positive dynamics \cite{Salgado:2002}, since for any $U=U_{\textsc{s}}\otimes U_{\textsc{b}}\in\Sgp$,  $\Psi_{U}^{\CS}(X) = U_{\textsc{s}}X U_{\textsc{s}}^{\dag}$ for any $X\in\TrB\CS$, which is trivially completely positive.

With $\Sgp$-consistency imposing such strong constraints on the space of admissible system-bath operators, the desire to keep the domain of the dynamical maps unconstrained seems misplaced.  If it is reasonable to promise that the initial system-bath state will never lie outside a low-dimensional subspace $\CS$, it should be reasonable to also promise that the initial system state will never lie outside a proper subspace $\TrB\CS\subset\BHS$. Building on this observation we now proceed to identify a unique linear dynamical map for the subsystem.

\subsection{Linear dynamical maps from \texorpdfstring{$\Sgp$}{G}-consistent subspaces}
\label{sec:linDynMaps}
Note that the quotient space $\CS/\CS_{0}$ admits some additional structure that will be useful in characterizing the assignment map.  First, observe that $\CS_{0}$ is a self-adjoint $\mathbb{C}$-linear subspace.  An affine subspace $X+\CS_{0}\in\CS/\CS_{0}$ contains a Hermitian operator if $X + \CS_{0} = Y + \CS_{0}$ with $Y = Y^{\dag}$, which holds if and only if the affine subspace is self-adjoint, i.e., $(X+\CS_{0})^{\dag} = X+\CS_{0}$ 
Such an affine subspace will then be spanned by Hermitian operators, i.e., if $(X+\CS_{0})^{\dag} = X+\CS_{0}$ and $H_{X}$ is the set of Hermitian elements in $X + \CS_{0}$, then $X+\CS_{0}$ is the $\mathbb{C}$-affine hull of $H_{X}$.  We may also identify a closed convex cone $(\CS/\CS_{0})^{+}$ of ``positive'' elements in $\CS/\CS_{0}$ as the collection of affine subspaces in $\CS/\CS_{0}$ that each contain at least one positive operator.  Thus $\CS/\CS_{0}$ is an ordered vector space with a conjugate-linear involution $\dag$.  The partial trace $\TrB$ maps each affine subspace in $(\CS/\CS_{0})^{+}$ to a positive operator in $\TrB\CS$.  Likewise, we call the assignment map $\AM_{\CS}:\TrB\CS\to\CS/\CS_{0}$ a ``positive'' map if $\AM_{\CS}$ maps every positive operator in $\TrB\CS$ to an element of $(\CS/\CS_{0})^{+}$.

\begin{mylemma}
	The assignment map $\AM_{\CS}:\TrB\CS\to\CS/\CS_{0}$ defined in Eq.~\eqref{eq:AM-def} is $\mathbb{C}$-linear,  $\dag$-linear [i.e.,  $\AM_{\CS}(X^{\dag}) = \AM_{\CS}(X)^{\dag}$], and trace-preserving.  
\end{mylemma}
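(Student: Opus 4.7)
The plan is to exploit the fact that $\AM_{\CS}$ is, by construction, the inverse of the map $\overline{\TrB}:\CS/\CS_{0}\to\TrB\CS$ induced on the quotient by $\TrB\big|_{\CS}$. The three properties will each follow by first checking the corresponding compatibility on $\CS$ and $\CS_{0}$, passing to the quotient, and then inverting.

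First I would verify that $\overline{\TrB}$ is a $\mathbb{C}$-linear bijection. Linearity on $\CS$ descends to the quotient because $\CS_{0}$ is a $\mathbb{C}$-linear subspace; surjectivity onto $\TrB\CS$ is the definition of the codomain; injectivity is by construction, since $\CS_{0}=\ker(\TrB\big|_{\CS})$ is exactly what we have quotiented out. A linear bijection between complex vector spaces has a linear inverse, so $\AM_{\CS}=\overline{\TrB}^{-1}$ is automatically $\mathbb{C}$-linear. This establishes the first claim with essentially no computation.

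Next I would handle $\dag$-linearity. The point is that $\CS$ is self-adjoint (noted already after the list of defining properties of a $\Sgp$-consistent subspace), and $\TrB$ commutes with the involution, so $\CS_{0}$ is also self-adjoint and the involution descends to a well-defined conjugate-linear map on $\CS/\CS_{0}$ given by $(Y+\CS_{0})^{\dag}=Y^{\dag}+\CS_{0}$. Then for any $X\in\TrB\CS$, picking any representative $Y\in\AM_{\CS}(X)$, one has $\TrB(Y^{\dag})=(\TrB Y)^{\dag}=X^{\dag}$, so $Y^{\dag}+\CS_{0}=\AM_{\CS}(X^{\dag})$, which is exactly $\AM_{\CS}(X)^{\dag}$.

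Finally, trace preservation follows from the tower property of the trace: if $Y\in\AM_{\CS}(X)$, then $\Tr Y=\Tr(\TrB Y)=\Tr X$. The only subtlety to point out is that this value is independent of the choice of representative, which is guaranteed because every element of $\CS_{0}$ has vanishing partial trace, hence vanishing trace; consequently $\Tr$ is well-defined on the equivalence classes of $\CS/\CS_{0}$, so $\Tr\AM_{\CS}(X)=\Tr X$ holds unambiguously.

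None of the three steps is really hard; if anything, the main thing to be careful about is the bookkeeping surrounding the quotient, namely that $\dag$ and $\Tr$ both descend from $\CS$ to $\CS/\CS_{0}$. Once that is checked, each property of $\AM_{\CS}$ is inherited from the corresponding property of $\TrB$ via the inversion of the quotient map.
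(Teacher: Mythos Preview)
Your proposal is correct and follows essentially the same approach as the paper: both arguments deduce the three properties of $\AM_{\CS}$ from the corresponding properties of $\TrB$ via the quotient construction, using linearity of $\TrB$ for $\mathbb{C}$-linearity, self-adjointness of $\CS_{0}$ together with $\TrB(Y^{\dag})=(\TrB Y)^{\dag}$ for $\dag$-linearity, and $\Tr=\Tr\circ\TrB$ for trace preservation. If anything, your version is slightly more careful about verifying that $\dag$ and $\Tr$ descend to $\CS/\CS_{0}$, which the paper treats somewhat implicitly.
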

\begin{proof}
 	$\mathbb{C}$-linearity of $\AM_{\CS}$ follows from the linearity of $\TrB\big|_{\CS}$.  Similarly, for any $X\in\BHSB$, $\TrB(X+\CS_{0}) = \TrB(X)$ and $(X+\CS_{0})^{\dag} = X^{\dag} + \CS_{0}$, so that $\TrB[(X+\CS_{0})^{\dag}] = \TrB(X^{\dag}) = \TrB(X)^{\dag}$.  And if $\TrB(X+\CS_{0}) = \TrB(Y+\CS_{0})$, then $X-Y\in\CS_{0}$, so $X+\CS_{0} = Y+\CS_{0}$.  Therefore $\AM_{\CS}(\TrB(X)^{\dag})$ can only be $X^{\dag} + \CS_{0} = \AM_{\CS}(\TrB(X))^{\dag}$, so that $\AM_{\CS}$ is $\dag$-linear.  Finally, $\Tr(\AM_{\CS}(X)) = \Tr(\TrB[\AM_{\CS}(X)]) = \Tr(X)$ for any $X\in\TrB\CS$, so that $\AM_{\CS}$ is trace-preserving.
\end{proof}

We are now able to define a unique $\mathbb{C}$-linear dynamical map for the subsystem, presented in the following lemma.
\begin{mylemma}
	\label{lem:linHermMap}
	Let $\CS\subset \BHSB$ be a $\Sgp$-consistent subspace.  For any $U\in \Sgp$ there is a unique map $\Psi_{U}^{\CS}:\TrB\CS\to\BHS$ such that the diagram in Figure \ref{fig:commutativeDiagram2} commutes, i.e., such that, for any operator $X\in\CS$, $\Psi_{U}^{\CS}(\TrB X) = \TrB(UXU^{\dag})$.  That map $\Psi_{U}^{\CS}:\TrB\CS\to\BHS$, given by $\Psi_{U}^{\CS} = \TrB\circ\Ad_{U}\circ\AM_{\CS}$ is $\mathbb{C}$-linear, $\dag$-linear, and trace-preserving over the domain $\TrB\CS$, and acts as the dynamical map for system states in $\TrB(\DSB\cap\CS)\subset\DS$. 
\begin{figure}
	\tikzimageswap{\includegraphics{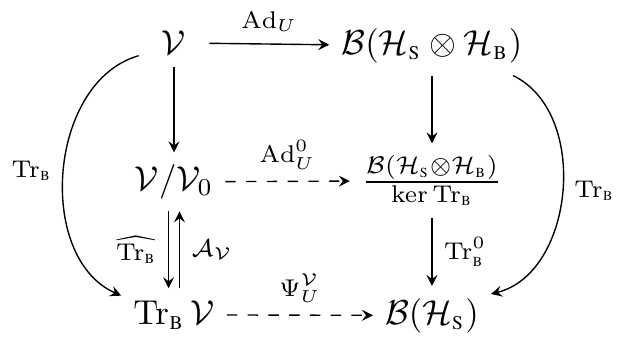}}{
	\tikzsetnextfilename{figure3}
	\begin{tikzpicture}
		\matrix (m) [matrix of math nodes, ampersand replacement=\&,row sep=2em,column sep=3em,minimum width=2em]
		{
			\CS \& \BHSB \\
	  		\CS/\CS_{0} \& \frac{\BHSB}{\ker\TrB} \\
	  		\TrB\CS \& \BHS\\ 
	  	};
	  	\path[-stealth,font=\scriptsize]
		(m-1-1) edge node [left] {} (m-2-1)
		(m-1-2) edge node [left] {} (m-2-2)
		(m-1-1) edge node [above] {$\Ad_{U}$} (m-1-2)
		([xshift=-0.35ex]m-2-1.south) edge node [left] {$\widehat{\TrB}$} ([xshift=-0.35ex]m-3-1.north)
		(m-2-1) edge [dashed] node [above] {$\Ad_{U}^{0}$} (m-2-2)
		(m-3-1) edge [dashed] node [above] {$\Psi_{U}^{\CS}$} (m-3-2)
		([xshift=0.35ex]m-3-1.north) edge node [right] {$\AM_{\CS}$} ([xshift=0.35ex]m-2-1.south)
		(m-2-2) edge node [right] {$\TrB^{0}$} (m-3-2)
		(m-1-1) edge [bend right=70] node [left] {$\TrB$} (m-3-1)
		(m-1-2) edge [bend left=70] node [right] {$\TrB$} (m-3-2);		
	\end{tikzpicture}}
	\caption{For any $\Sgp$-consistent subspace $\CS$ and unitary evolution operator $U\in\Sgp\subset\USB$, this commutative diagram uniquely defines the $\mathbb{C}$-linear, $\dag$-linear, trace-preserving map $\Psi_{U}^{\CS} = \TrB^{0}\circ\Ad_{U}^{0}\circ\AM_{\CS}$ which acts as the time evolution operator for system states in $\TrB(\DSB\cap\CS)\subset\DS$.  $\Sgp$-consistency is the condition that, for every $U\in\Sgp$, a map $\Ad_{U}^{0}:\CS/\CS_{0}\to\BHSB/\ker\TrB$ exists which makes the top rectangle commute.}
	\label{fig:commutativeDiagram2}
\end{figure}
\end{mylemma}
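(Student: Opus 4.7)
The plan is to verify that the formula $\Psi_{U}^{\CS} := \TrB \circ \Ad_{U} \circ \AM_{\CS}$ both exists (is well-defined) and is the unique map making the outer rectangle of Figure \ref{fig:commutativeDiagram2} commute, then read off its algebraic properties from those of its factors. The central conceptual issue is that $\AM_{\CS}$ takes values in the quotient $\CS/\CS_{0}$, so I must confirm that the action of $\Ad_{U}$ descends consistently to that quotient; this is precisely where $\Sgp$-consistency enters.

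First I would establish the intermediate map $\Ad_{U}^{0}:\CS/\CS_{0}\to\BHSB/\ker\TrB$. Given an equivalence class $X+\CS_{0}\in\CS/\CS_{0}$, set $\Ad_{U}^{0}(X+\CS_{0}) := UXU^{\dag} + \ker\TrB$. For well-definedness I must show this does not depend on the representative $X$: if $X-Y\in\CS_{0}$ then $\TrB(X-Y)=0$ and, by property \ref{item:GconsLinSp} of $\Sgp$-consistent subspaces, $\TrB[U(X-Y)U^{\dag}]=0$, i.e.\ $UXU^{\dag} - UYU^{\dag}\in\ker\TrB$. Hence $\Ad_{U}^{0}$ is a well-defined $\mathbb{C}$-linear map making the top rectangle of Figure \ref{fig:commutativeDiagram2} commute. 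Similarly, $\TrB$ descends to an isomorphism $\TrB^{0}:\BHSB/\ker\TrB \to \BHS$ since the kernel has been quotiented out. With these two ingredients together with the previously established $\AM_{\CS}$, the composition $\Psi_{U}^{\CS} := \TrB^{0}\circ\Ad_{U}^{0}\circ\AM_{\CS}$ is defined on all of $\TrB\CS$ and equals $\TrB\circ\Ad_{U}\circ\AM_{\CS}$ as a map into $\BHS$.

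Next I would verify the commutation property. For any $X\in\CS$, $\AM_{\CS}(\TrB X) = X+\CS_{0}$ by Eq.~\eqref{eq:AM-def}; applying $\Ad_{U}^{0}$ gives $UXU^{\dag}+\ker\TrB$, and applying $\TrB^{0}$ gives $\TrB(UXU^{\dag})$. Thus $\Psi_{U}^{\CS}(\TrB X) = \TrB(UXU^{\dag})$ for every $X\in\CS$, as required. Uniqueness is immediate: since $\TrB\big|_{\CS}:\CS\to\TrB\CS$ is surjective by construction, any two maps $\TrB\CS\to\BHS$ agreeing with $X\mapsto\TrB(UXU^{\dag})$ on the image of $\TrB\big|_{\CS}$ must coincide on all of $\TrB\CS$.

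Finally, the structural properties transfer from the factors. $\mathbb{C}$-linearity and $\dag$-linearity of $\AM_{\CS}$ were proven in the preceding lemma; $\Ad_{U}$ is manifestly $\mathbb{C}$-linear and satisfies $(UXU^{\dag})^{\dag}=UX^{\dag}U^{\dag}$; and $\TrB$ is both $\mathbb{C}$-linear and $\dag$-linear. Composition therefore yields a $\mathbb{C}$- and $\dag$-linear $\Psi_{U}^{\CS}$. Trace preservation follows from the chain $\Tr\Psi_{U}^{\CS}(X) = \Tr\TrB[U\AM_{\CS}(X)U^{\dag}] = \Tr[U\AM_{\CS}(X)U^{\dag}] = \Tr\AM_{\CS}(X) = \Tr X$, using cyclicity of the trace and that $\AM_{\CS}$ is trace-preserving. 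For $\rho\in\DSB\cap\CS$, the commutation identity just proved specializes to $\Psi_{U}^{\CS}(\TrB\rho) = \TrB(U\rho U^{\dag})$, which is the partial-trace QDP of Eq.~\eqref{eq:rho_S(t)}, confirming that $\Psi_{U}^{\CS}$ is indeed the dynamical map on the physical domain $\TrB(\DSB\cap\CS)$. The only subtle step is the first one (well-definedness of $\Ad_{U}^{0}$); everything else is a bookkeeping exercise on commutative diagrams.
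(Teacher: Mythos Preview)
Your proof is correct and follows essentially the same approach as the paper: define $\Psi_{U}^{\CS}$ as the composition $\TrB^{0}\circ\Ad_{U}^{0}\circ\AM_{\CS}$, observe uniqueness, and read off $\mathbb{C}$-linearity, $\dag$-linearity, and trace preservation from the factors. You are in fact more explicit than the paper in two places---the well-definedness of $\Ad_{U}^{0}$ (which the paper handles in the paragraph preceding the assignment-map definition rather than inside the proof) and the trace-preservation chain---and your uniqueness argument via surjectivity of $\TrB\big|_{\CS}$ is cleaner than the paper's appeal to uniqueness of the three component maps, but the underlying logic is the same.
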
  
\begin{proof}
	The uniqueness of $\Psi_{U}^{\CS}$ is due to the uniqueness of the assignment map $\AM_{\CS}$ and the maps $\Ad_{U}^{0}:\CS/\CS_{0}\to\BHSB/{\ker\TrB}$ and $\TrB^{0}:\BHSB/\ker\TrB\to\BHS$.  Since these maps are all $\mathbb{C}$-linear and $\dag$-linear, $\Psi_{U}$ is as well.  The $\dag$-linearity of $\Psi_{U}^{\CS}$ is equivalently expressed as $\Psi_{U}$ being Hermiticity-preserving, which is sometimes shortened to just ``Hermitian''. For any state $\rhoS\in\TrB(\DSB\cap\CS)$, {the affine subspace $\AM_{\CS}(\rhoS)$} must contain a state in $\DSB$, so the transformation $\rhoS\mapsto\Psi_{U}^{\CS}(\rhoS)$ reflects the unitary evolution of a valid system-bath state in $\CS$, and therefore $\Psi_{U}^{\CS}$ is the dynamical map for such a state.
\end{proof}

\begin{mydefinition}[Physical Domain]
	We call the convex set $\TrB(\DSB\cap\CS)\subset\DS$ the \emph{physical domain} because, as described in Lemma \ref{lem:linHermMap}, these are the system states for which the maps $\Psi_{U}^{\CS}$ act as the physical dynamical maps.  This is called the ``compatibility domain'' in \cite{Jordan:2004}.  The ``promise'' that initial system-bath states will lie in $\CS\cap\DSB$ implies a ``promise'' that initial system-states will lie in the physical domain.
\end{mydefinition}

We stress a few key points concerning this construction:
\begin{enumerate}
\item The linearity of the maps $\AM_{\CS}$ and $\Psi_{U}^{\CS}$ is due to our choice to assume that the set of admissible initial system-bath states belong to a linear subspace.  Depending on the physical processes involved, other choices are possible, leading to non-linear assignment maps and non-linear evolution operators $\Psi_{U}^{\CS}$ \cite{Romero:2004, Stelmachovic:2001, Breuer:book}.  
\item The map $\Psi_{U}^{\CS}$ does not generally act as the evolution operator for all system states, or even for all system states in $\DS\cap\TrB\CS$.  Any state $\rho\in\DS\cap\TrB\CS$ which does not lie in the convex ``physical domain'' $\TrB(\DSB\cap\CS)$ is mapped by the assignment map to an affine subspace of $\BHSB$ which contains no valid system-bath states in $\DSB$.  Since the transformation by $\Psi_{U}^{\CS}$ of such a system state is not tied to the unitary evolution of a valid system-bath state, it is empty of any physical meaning.  Such a map $\Psi_{U}^{\CS}$ should never be described without clearly indicating the physical domain of system states on which it can meaningfully be applied.  To apply the map outside this domain is to overinterpret the mathematics. 

\item For each $U\in \Sgp$, the definition of $\Psi_{U}^{\CS}$ depends entirely upon the $\Sgp$-consistent subspace $\CS$.  Different consistent subspaces, even if they include some shared fiducial state $\rhoSB$, will yield different maps $\Psi_{U}^{\CS}$.
\end{enumerate}

\begin{figure}
	\scalebox{0.8}{
	\tikzimageswap{\includegraphics{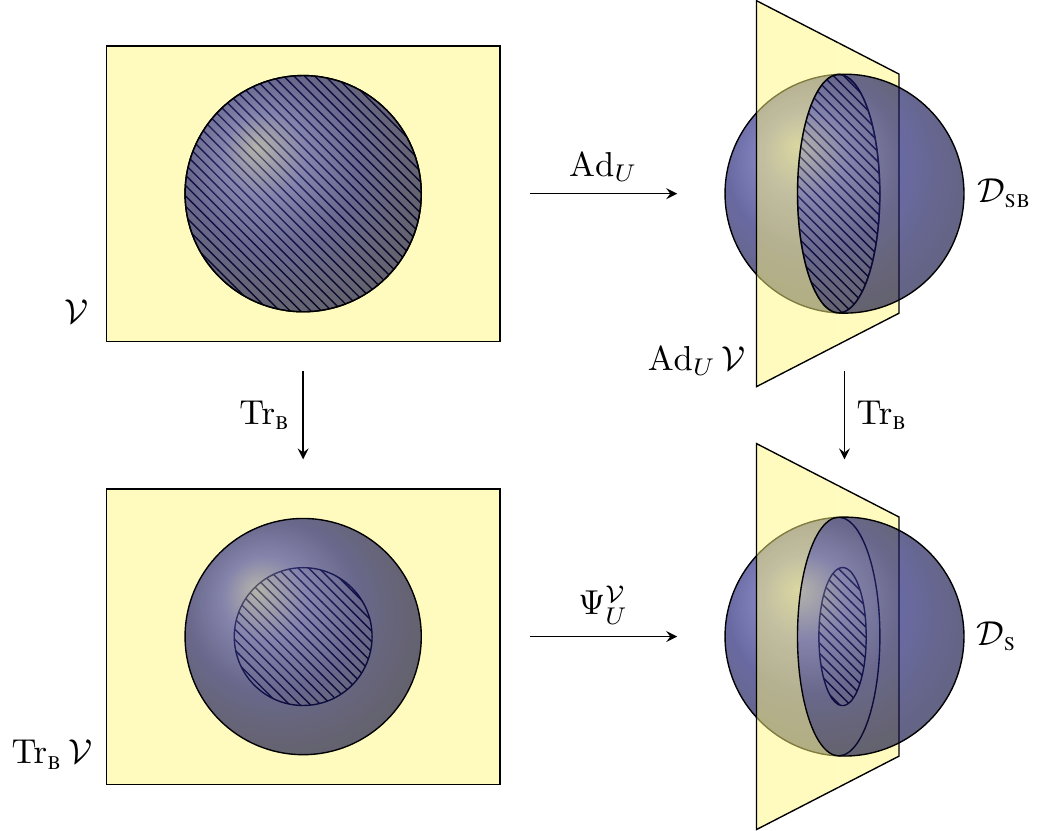}}{
	\def\cy{8} 
	\def\cz{0}
	\def\plcolor{yellow!50}
	\def\spcolor{blue!80}
	\tikzsetnextfilename{figure4}
	\begin{tikzpicture}[declare function={
			cameraProjectX(\xx,\yy,\zz)=\xx*(\cy*\cy + \cz*\cz)/(\cy*(\cy-\yy) + \cz*(\cz-\zz)); 
			cameraProjectY(\xx,\yy,\zz)=(\cy*\zz-\cz*\yy)*(sqrt(\cy*\cy + \cz*\cz))/(\cy*(\cy-\yy) + \cz*(\cz-\zz));
		}]
		\spherePlane{2}{6.0}{1.2}
		\draw (-.3,4.8) node {$\CS$};
		
		\draw[->,>=stealth] (2,4.2) to node [left] {$\TrB$} (2,3.3);
		
		\spherePlane{2}{1.5}{0.7}
		\draw (-.55,0.3) node {$\TrB\CS$};

		\draw[->,>=stealth] (4.3,6.0) to node [above] {$\Ad_{U}$} (5.8,6.0);

		\rotatedSpherePlaneRight{7.5}{6.0}{-70}{1.2}
		\draw (6.0,4.3) node {$\Ad_{U}\CS$};
		\draw (9.13,6.0) node {$\DSB$};
		
		\draw[->,>=stealth] (7.5,4.2) to node [right] {$\TrB$} (7.5,3.3);

		\rotatedSpherePlaneRight{7.5}{1.5}{-70}{0.7}

		\draw[->,>=stealth] (4.3,1.5) to node [above] {$\Psi_{U}^{\CS}$} (5.8,1.5);
		\draw (9.05,1.5) node {$\DS$};
	\end{tikzpicture}
	}}
	\caption{Schematic representation of the role of states and, in particular, the physical domain $\TrB(\DSB\cap\CS)$ in the descriptions of the spaces and maps involved in the present framework.  In the upper two diagrams, the ball represents $\DSB$, the convex set of all system-bath density matrices, and the ruled areas are the intersections $\CS\cap\DSB$ and $\Ad_{U}\CS\cap \DSB$.  In the lower two diagrams, the ball represents the convex set $\DS$ of all system density matrices while the ruled areas represent $\TrB(\CS\cap\DSB)$, which is the physical domain, and $\TrB\Ad_{U}(\CS\cap\DSB)$, which is the image under $\Psi_{U}^{\CS}$ of the physical domain.}
	\label{fig:4}
\end{figure}

We illustrate these concepts in Fig.~\ref{fig:4}.

\subsection{Operator Sum Representations}
\label{sec:OSR}
The map $\Psi_{U}^{\CS}$ may be extended to a $\mathbb{C}$-linear,  $\dag$-linear, trace-preserving map $\Phi_{U}^{\CS}$ on $\BHS$, for example by defining $\Phi_{U}^{\CS}$ to be zero on $\BHS/\TrB\CS$, the orthogonal complement of $\TrB\CS$.  Such an extension, though arbitrary, may be desirable because the resulting $\Phi_{U}^{\CS}$  admits an operator sum representation (OSR) with real-valued coefficients 
\begin{equation}
	\Phi_{U}^{\CS}(X) = \sum_{k}a_{k}E_{k}X E_{k}^{\dag},
\end{equation}
for $a_{k}\in\mathbb{R}$ \cite{dePillis:1967,Choi:1975}.  

In the case of an OSR such that $a_{k} = 1$ for all $k$, we call such a representation a Kraus OSR.

\begin{mylemma}
	The assignment map $\AM_{\CS}:\TrB\CS\to\CS/\CS_{0}$ associated to any $\Sgp$-consistent subspace $\CS$ admits an OSR with real-valued coefficients of the form $\AM_{\CS}(X) = \sum_{k}a_{k}Q_{k}X Q_{k}^{\dag} + \CS_{0}$.
\end{mylemma}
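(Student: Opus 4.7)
My strategy mirrors the extension argument used for $\Phi_{U}^{\CS}$ in the paragraph immediately preceding the lemma: lift $\AM_{\CS}$ to a $\mathbb{C}$-linear, $\dag$-linear map between full operator spaces and then invoke the dePillis--Choi theorem \cite{dePillis:1967,Choi:1975} on real-coefficient operator sum representations of Hermiticity-preserving maps. The only wrinkle is that $\AM_{\CS}$ is valued in the quotient $\CS/\CS_{0}$ rather than in an operator space directly, so some additional care is needed to produce a representative lift that remains $\dag$-linear.

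First I would construct a canonical $\dag$-linear right inverse to $\TrB|_{\CS}$. The subspace $\CS_{0} = \ker(\TrB|_{\CS})$ is self-adjoint because $\CS$ is self-adjoint and the partial trace commutes with $\dag$, and hence so is the Hilbert--Schmidt orthogonal complement of $\CS_{0}$ inside $\CS$, which I will denote $\CS_{0}^{\perp}$. The restriction $\TrB|_{\CS_{0}^{\perp}}$ is then a linear bijection onto $\TrB\CS$, and I would define $\bar{\AM}:\TrB\CS\to\CS$ to be its inverse. Self-adjointness of $\CS_{0}^{\perp}$ together with adjoint-preservation of $\TrB$ makes $\bar{\AM}$ a $\dag$-linear map, and by construction $\bar{\AM}(X)+\CS_{0}=\AM_{\CS}(X)$ for every $X\in\TrB\CS$.

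Next I would extend $\bar{\AM}$ to all of $\BHS$ using the same recipe used for $\Phi_{U}^{\CS}$ just before the lemma: since $\TrB\CS$ is self-adjoint (as the image of a self-adjoint subspace under the adjoint-preserving partial trace), its Hilbert--Schmidt orthogonal complement in $\BHS$ is also self-adjoint, and declaring $\tilde{\AM}:\BHS\to\BHSB$ to equal $\bar{\AM}$ on $\TrB\CS$ and zero on that complement yields a $\mathbb{C}$-linear, $\dag$-linear map. Applying dePillis--Choi to $\tilde{\AM}$ produces an OSR
\begin{equation}
\tilde{\AM}(X) = \sum_{k} a_{k} Q_{k} X Q_{k}^{\dag},
\end{equation}
with $a_{k}\in\mathbb{R}$ and $Q_{k}:\HS\to\HSB$; restricting $X$ to $\TrB\CS$ and adding $\CS_{0}$ to both sides gives the claimed representation $\AM_{\CS}(X)=\sum_{k}a_{k}Q_{k}X Q_{k}^{\dag}+\CS_{0}$.

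The chief delicate point is the bookkeeping needed to guarantee that every subspace entering the two orthogonal decompositions ($\CS$, $\CS_{0}$, $\TrB\CS$, and their Hilbert--Schmidt complements) is self-adjoint, so that $\dag$-linearity survives both the lift and the extension. This reduces to the elementary identity $\langle A^{\dag},B^{\dag}\rangle_{\mathrm{HS}}=\overline{\langle A,B\rangle_{\mathrm{HS}}}$ applied subspace by subspace. Once it is in place the lemma is an immediate corollary of dePillis--Choi, so I do not anticipate any substantive obstacle beyond this self-adjointness bookkeeping.
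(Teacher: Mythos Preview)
Your proposal is correct and follows essentially the same extend-then-apply-dePillis--Choi strategy as the paper. The only difference is where the representative is chosen: you lift out of the quotient $\CS/\CS_{0}$ at the outset via the Hilbert--Schmidt section $\CS_{0}^{\perp}\subset\CS$, whereas the paper keeps the quotient-valued extension $\widehat{\AM_{\CS}}:\BHS\to\BHSB/\CS_{0}$ and instead selects a Hermitian representative at the level of the Choi matrix before eigendecomposing.
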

\begin{proof}
	This may be shown, for example, by extending $\AM_{\CS}$ to a $\mathbb{C}$-linear, $\dag$-linear map $\widehat{\AM_{\CS}}:\BHS\to\BHSB/\CS_{0}$ and applying Choi's method to $\widehat{\AM_{\CS}}$ as follows.  First, choose orthonormal bases $\{\ket{i}\}_{i=1}^{d_{\textsc{s}}}$ and $\{\ket{\alpha}\}_{\alpha=1}^{d_{\textsc{b}}}$ for $\HS$ and $\HB$, observe that $\widehat{\AM_{\CS}}\otimes\id_{\BHS}$ is $\dag$-linear so that the affine subspace $\widehat{\AM_{\CS}}\otimes\id_{\BHS}(\ketbra{\mathcal{E}}{\mathcal{E}})$ is spanned by Hermitian operators, where $\ket{\mathcal{E}} = \sum_{i}\ket{i}\otimes\ket{i}\in\HS\otimes\HS$.  Choose any Hermitian operator $A$ in this space, choosing $A\geq 0$ if possible, and eigendecompose 
	\begin{equation}
		\widehat{\AM_{\CS}}\otimes\id_{\BHS}(\ketbra{\mathcal{E}}{\mathcal{E}}) \ni A =  \sum_{k}a_{k}\ketbra{q_{k}}{q_{k}}.
	\end{equation}
	The operators $Q_{k}\in\BL(\HS;\HSB)\simeq \HSB\otimes\HS^{*}$ defined by 
	\begin{equation}
		Q_{k} = \sum_{j,\alpha,i}\braket{j,\alpha,i}{q_{k}}\ketbra{j,\alpha}{i}
	\end{equation}
	yield an OSR for the extended assignment map: $\widehat{\AM_{\CS}}(X) = \sum_{k}a_{k}Q_{k}X Q_{k}^{\dag} + \CS_{0}$.  These $Q_{k}$ are essentially partial transposes of the $\ket{q_{k}}\in\HSB\otimes\HS$, transforming the ``column vectors'' in the second $\HS$ to ``row vectors'' in $\HS^{*}$.
\end{proof}
It follows that $\Phi_{U}^{\CS}(X) = \sum_{k,\alpha}a_{k}E_{k\alpha}X E_{k\alpha}^{\dag}$, where $E_{k\alpha} = \langle \alpha|UQ_{k}\in\BHS$.  

In special cases, this general recipe may not be the most efficient for obtaining an OSR.  For example, in the case of a Kraus map where $\AM_{\CS}(X) = X\otimes\rhoB$ and $\rhoB = \sum_{\alpha}p_{\alpha}\ketbra{\alpha}{\alpha}$, an OSR for $\AM_{\CS}$ is given by the operators $Q_{\alpha} = \sqrt{p_{\alpha}}\identity\otimes\ket{\alpha}$, so that $\Phi_{U}^{\CS}(X) = \sum_{\alpha,\beta}E_{\alpha,\beta}X E_{\alpha,\beta}^{\dag}$ where $E_{\alpha,\beta} = \bra{\beta}UQ_{\alpha}$.  It bears repeating that, regardless of these choices for constructing the OSR, the resulting map $\Phi_{U}^{\CS}$ only acts as the subsystem dynamical map due to $U$ on system states in the ``physical domain'' $\TrB(\DSB\cap\CS)\subset\DS$.

\section{Complete Positivity}
\label{sec:CPity}

Having described, for each allowed joint system-bath unitary evolution $U\in \Sgp$, the unique dynamical map $\Psi_{U}:\TrB\CS\to\BHS$ which is $\mathbb{C}$-linear, $\dag$-linear, and trace-preserving, we turn to the question of complete positivity.

\subsection{Notions of Complete Positivity}
Because $\AM_{\CS}$ and $\Psi_{U}$ are generally defined on a subspace of $\BHS$, rather than the full algebra, we may consider three generally nonequivalent definitions of complete positivity of the subsystem dynamics; one which is essentially the original definition of Stinespring \cite{Stinespring:1955}, and two more that we introduce.
\begin{mydefinition}[Complete Positivity]
	\label{def:CPmap}
	Let $\mathcal{K}$ and $\HH$ be Hilbert spaces and let $\mathcal{R}\subset\BL(\mathcal{K})$ be a self-adjoint $\mathbb{C}$-linear subspace spanned by positive operators.  A $\mathbb{C}$-linear, $\dag$-linear map $F:\mathcal{R}\to\BL(\HH)$ is 
	\begin{enumerate}
	\item \emph{Completely Positive (CP)} \cite{Stinespring:1955} if $F\otimes \id:\mathcal{R}\otimes\BHW\to\BL(\HH\otimes\HW)$ is a positive map for all finite dimensional Hilbert spaces $\HW$, i.e., every positive operator in 
	\begin{equation}
		\mathcal{R}\otimes\BHW = \Span_{\mathbb{C}}\{A\otimes B\;:\; A\in\mathcal{R}, B\in\BHW\}
	\end{equation}
	is mapped to a positive operator in $\BL(\HH\otimes\HW)$.  $\HW$ may be thought of as the state space of a non-interacting, non-evolving ``witness'' system;
	\item \emph{Completely Positively Trace-Preserving Extensible (CPTE)} if $F$ admits a completely positive trace-preserving extension $\tilde{F}:\BL(\mathcal{K})\to\BL(\HH)$, i.e., if there exists a trace-preserving CP map $\tilde{F}:\BL(\mathcal{K})\to\BL(\HH)$ such that $\tilde{F}\big|_{\mathcal{R}} = F$;
	\item \emph{Completely Positively Zero Extensible (CPZE)} if ${F\circ\mathcal{P}_{\mathcal{R}}}:\BL(\mathcal{K})\to\BL(\HH)$ is completely positive, where $\mathcal{P}_{\mathcal{R}}:\BL(\mathcal{K})\to\mathcal{R}$ is the orthogonal projection onto $\mathcal{R}$ with respect to the Hilbert-Schmidt inner product.
	\end{enumerate}
\end{mydefinition}

It is straightforward to see that if a map $F:\mathcal{R}\to\BL(\HH)$ is CPTE, it is also CP and trace-preserving, since restricting the trace-preserving CP extension $\tilde{F}:\BL(\mathcal{K})\to\BL(\HH)$ to the subspace $\mathcal{R}\subset \BL(\mathcal{K})$ does not break either complete positivity or trace-preservation.  The following theorem, translated into this terminology of consistent subspaces, may be considered a partial converse of this statement.

\begin{mytheorem}[Jen{\v{c}}ov{\'a} \cite{Jencova:2012}]
	\label{thm:Arveson}
	Since $\CS$ is spanned by states, so is $\TrB\CS$, and therefore every CP map with domain $\TrB\CS$ can be extended to a CP map on $\BHS$, i.e., every CP map $F:\TrB\CS\to\BHS$ admits a completely positive extension $\tilde{F}:\BHS\to\BHS$ such that $\tilde{F}\big|_{\TrB\CS} = F$.  This is a generalization of Arveson's theorem \cite{Arveson:1969}.
\end{mytheorem}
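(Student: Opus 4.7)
First, I would verify the preliminary structural claim: $\TrB\CS$ is a self-adjoint $\mathbb{C}$-linear subspace of $\BHS$ spanned by positive operators. Self-adjointness follows because $\CS$ is self-adjoint (a consequence of property~\ref{item:positiveSpan}, which implies $\CS$ is spanned by Hermitian density matrices) and the partial trace commutes with the adjoint. The spanning property is immediate from $\CS = \Span_{\mathbb{C}}(\CS \cap \DSB)$: every element of $\TrB\CS$ is a complex combination of terms $\TrB\rho$ for $\rho \in \CS \cap \DSB$, and each such partial trace is a density matrix on $\HS$, hence positive.

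For the extension itself, the strategy is to reduce to the classical Arveson extension theorem, which applies only to CP maps defined on \emph{operator systems}, i.e., self-adjoint subspaces containing the identity. Since $\TrB\CS$ need not contain $\identity_{\HS}$, I would interpose the operator system $\mathcal{S} := \TrB\CS + \mathbb{C}\identity_{\HS}$ and first extend $F$ to a CP map $F' : \mathcal{S} \to \BHS$, after which Arveson's theorem delivers the final extension $\tilde F : \BHS \to \BHS$ with $\tilde F\big|_{\TrB\CS} = F$. To construct $F'$, set $F'(\lambda\identity_{\HS} + X) := \lambda M + F(X)$ for $\lambda \in \mathbb{C}$ and $X \in \TrB\CS$, where $M \in \BHS$ is a positive operator to be chosen so that $F'$ is CP. A natural candidate is $M = c \cdot \identity_{\HS}$ for sufficiently large $c > 0$; finite-dimensional compactness of the normalized positive cones in $\mathcal{S} \otimes \BHW$, for each witness space $\HW$, should then guarantee a uniform $c$ that dominates the response of $F$ to the non-identity components of any positive element of $\mathcal{S} \otimes \BHW$. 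A Hahn--Banach separation argument, in the spirit of Arveson's original proof, formalizes the existence of such an $M$.

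The main obstacle is verifying complete positivity of $F'$ on all of $\mathcal{S}$: this requires a careful decomposition of positive elements of $\mathcal{S} \otimes \BHW$ into identity and $\TrB\CS$ components, and an argument that the map on identity components can be ``tuned up'' via $M$ without disturbing the values $F$ assumes on $\TrB\CS$. An alternative, potentially cleaner route is to show directly that every CP map on a subspace spanned by positives admits a Kraus-type operator sum representation $F(X) = \sum_k V_k X V_k^{\dag}$, in analogy with the OSR result of Section~\ref{sec:OSR} but with positive (rather than merely real) coefficients; the formula then extends verbatim to all of $\BHS$. Either route depends on leveraging the ``enough positivity'' condition that $\TrB\CS$ is spanned by its positive cone, which is the precise content of Jen{\v{c}}ov{\'a}'s generalization of Arveson's theorem.
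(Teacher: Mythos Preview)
The paper does not prove this theorem; it is stated with attribution to Jen\v{c}ov\'{a} and then used as a black box, so there is no ``paper's own proof'' to compare against.

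On the merits of your proposal itself: the two-step architecture (extend $F$ from $\TrB\CS$ to the operator system $\mathcal{S} = \TrB\CS + \mathbb{C}\identity$, then invoke classical Arveson) is reasonable, but your concrete suggestion for the first step is wrong. Taking $M = c\,\identity$ for large $c$ does \emph{not} work. Positive elements of $\mathcal{S}$ can carry a \emph{negative} identity component: if $\rho\in\TrB\CS$ is a state with smallest eigenvalue $\lambda_{\min}>0$, then $\rho - \lambda_{\min}\identity \geq 0$ lies in $\mathcal{S}$, and positivity of $F'(\rho - \lambda_{\min}\identity) = F(\rho) - \lambda_{\min}c\,\identity$ forces $c \leq \lambda_{\min}\bigl(F(\rho)\bigr)/\lambda_{\min}$. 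Combined with the lower bound coming from $\lambda_{\max}\identity - \rho \geq 0$, a scalar $c$ exists only when the condition number of $F(\rho)$ is bounded by that of $\rho$, which CP-ness of $F$ on the one-dimensional space $\mathbb{C}\rho$ (i.e., merely $F(\rho)\geq 0$) does not enforce. So $M$ must be genuinely operator-valued and adapted to $F$; producing such an $M$ and verifying complete (not just ordinary) positivity of $F'$ is precisely the nontrivial content that the ``spanned by states'' hypothesis is needed for, and your compactness sketch does not supply it.

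Your alternative route via a Kraus OSR is closer in spirit to how such extension results are actually proved, but you have not said how to get \emph{nonnegative} coefficients: Choi's theorem applies to maps on a full matrix algebra, and for a map defined only on a subspace the analogue of the Choi matrix need not be positive semidefinite even when the map is CP in Stinespring's sense. Bridging that gap is again exactly where the hypothesis that $\TrB\CS$ is spanned by its positive cone enters, and it requires an argument you have not given.
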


Jen{\v{c}}ov{\'a}'s theorem is a partial converse of CPTE$\Rightarrow$CP because it does not guarantee that the CP extension of a trace-preserving CP map on $\TrB\CS$ is itself trace-preserving on $\BHS\setminus\TrB\CS$.  Indeed, the conditions allowing trace-preserving CP extensions are not currently known \cite{Heinosaari:2012}.  However, even if the CP extension $\tilde{F}$ is not trace-preserving in general, it may still be important since $\tilde{F}$, being a CP map defined on the entire algebra $\BHS$, admits a Kraus OSR \cite{Kraus:1971, Choi:1975}.  As such, the existence of a Kraus OSR for the map $F:\mathcal{R}\to\BH$ is an alternate characterization of CP-ness, even for maps defined on state-spanned subspaces such as $\TrB\CS$.

Likewise, if a map $F:\mathcal{R}\to\BL(\HH)$ is CPZE, then it is CP, since the zero extension is just one possible extension of $F$.  The following theorem may be considered a partial converse of this statement.

\begin{mytheorem}[Choi \& Effros \cite{Choi:1977}]
	\label{thm:ChoiEffros}
	If $\TrB\CS$ is a unital $C^{\ast}$-subalgebra of $\BHS$, then the orthogonal projection $\mathcal{P}_{\TrB\CS}$ is completely positive, and therefore every CP map with domain $\TrB\CS$ is CPZE.
\end{mytheorem}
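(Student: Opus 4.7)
My plan is to prove the theorem in two stages: first establish that the orthogonal projection $\mathcal{P}_{\TrB\CS}$ is itself completely positive, then deduce the CPZE property by composition.

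For the first stage, I would exploit the finite-dimensional structure theorem for $C^*$-subalgebras. Since $\TrB\CS$ is a unital $C^*$-subalgebra of $\BHS$ with $\HS$ finite-dimensional, there exists an orthogonal decomposition $\HS = \bigoplus_k \mathcal{H}_k \otimes \mathcal{K}_k$ such that $\TrB\CS$ is unitarily equivalent to $\bigoplus_k \BL(\mathcal{H}_k) \otimes \identity_{\mathcal{K}_k}$. Let $P_k$ denote the orthogonal projection of $\HS$ onto $\mathcal{H}_k \otimes \mathcal{K}_k$.

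Next, I would derive an explicit formula for $\mathcal{P}_{\TrB\CS}$ relative to this decomposition. Any $X \in \BHS$ decomposes as $X = \sum_{k,\ell} P_k X P_\ell$, and an element of $\TrB\CS$ has only diagonal blocks of the special form $A_k \otimes \identity_{\mathcal{K}_k}$. A direct Hilbert-Schmidt inner product calculation shows that the orthogonal projection is
\begin{equation}
	\mathcal{P}_{\TrB\CS}(X) = \bigoplus_k \frac{1}{\dim\mathcal{K}_k}\Tr_{\mathcal{K}_k}\bigl(P_k X P_k\bigr) \otimes \identity_{\mathcal{K}_k}.
\end{equation}
Each constituent operation in this formula is completely positive: compression $X \mapsto P_k X P_k$ by a projection, the partial trace $\Tr_{\mathcal{K}_k}$, tensoring with the identity $\identity_{\mathcal{K}_k}$, rescaling by a positive constant, and taking direct sums. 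The composition is therefore CP, so $\mathcal{P}_{\TrB\CS}$ is CP.

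For the second stage, suppose $F:\TrB\CS \to \BHS$ is a CP map. Since $\mathcal{P}_{\TrB\CS}:\BHS \to \TrB\CS \subset \BHS$ is CP, the composition $F \circ \mathcal{P}_{\TrB\CS} : \BHS \to \BHS$ is CP, because complete positivity is preserved under composition (for any witness space $\HW$, $(F \circ \mathcal{P}_{\TrB\CS})\otimes\id = (F\otimes\id)\circ(\mathcal{P}_{\TrB\CS}\otimes\id)$ is a composition of positive maps). By Definition \ref{def:CPmap}, this is precisely the statement that $F$ is CPZE.

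The main obstacle I expect is the bookkeeping required to verify the explicit projection formula, in particular confirming that cross-block terms $P_k X P_\ell$ with $k \neq \ell$ and the traceless-on-$\mathcal{K}_k$ part of $P_k X P_k$ are HS-orthogonal to $\TrB\CS$ and that the listed image lies in $\TrB\CS$; once this is done, complete positivity follows transparently from the CP closure properties. An alternative route, should the explicit formula prove cumbersome, is to invoke Tomiyama's theorem (any norm-one projection onto a $C^*$-subalgebra is a conditional expectation) together with the classical fact that conditional expectations onto $C^*$-subalgebras are completely positive, after verifying that the HS-orthogonal projection onto a unital $C^*$-subalgebra has operator norm one.
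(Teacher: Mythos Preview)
The paper does not supply its own proof of this theorem: it is stated as a cited result from Choi and Effros \cite{Choi:1977}, with no accompanying \texttt{proof} environment. So there is no in-paper argument to compare against.

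That said, your proposal is correct and self-contained. The finite-dimensional structure theorem for unital $C^*$-subalgebras gives exactly the block decomposition you describe, and your formula for $\mathcal{P}_{\TrB\CS}$ as a direct sum of normalized partial traces on the diagonal blocks is the right one (the orthogonality checks you flag as the ``main obstacle'' are routine: off-diagonal blocks $P_k X P_\ell$ with $k\neq\ell$ are HS-orthogonal to every $A\otimes\identity_{\mathcal{K}_k}$ trivially, and on a diagonal block the HS inner product factors so that only the $\identity_{\mathcal{K}_k}$-component survives). Each ingredient---compression by a projection, partial trace, tensoring with a fixed positive operator, positive scaling, direct sum---is CP, so the projection is CP, and CPZE of any CP $F$ follows by composition exactly as you say. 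Your alternative via Tomiyama's theorem also works once one checks that in finite dimensions the HS-orthogonal projection onto a unital $C^*$-subalgebra coincides with the (unique) trace-preserving conditional expectation, which is a norm-one projection; but the explicit route is more elementary and arguably preferable here.
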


That the property of being CP does not necessarily imply CPZE-ness is demonstrated by the following counterexample.

\begin{myexample}
	\label{ex:ChoiEffrosCounterEx}
	Suppose $\HS$ is 3-dimensional with orthonormal basis $\{|i\rangle\}_{i=0}^{2}$.  Fix some $\rho\in\DB$ and let 
	\begin{equation}
		\CS = \Span_{\mathbb{C}}\{\ketbra{i}{j}\otimes\rho\;:\; (0,1)\neq (i,j)\neq (1,0)\}.
	\end{equation}
	The assignment map $\AM_{\CS}$ associated with this $\CS$ is CP but not CPZE, demonstrating that the hypothesis of Theorem \ref{thm:ChoiEffros} that $\TrB\CS$ be a subalgebra of $\BHS$ is needed.
\end{myexample}

\begin{proof}
	$\TrB\CS$ is given by 
	\begin{equation}
		\TrB\CS = \Span_{\mathbb{C}}\{\ketbra{i}{j}\;:\; (0,1)\neq (i,j)\neq (1,0)\}.
	\end{equation}  
	The orthogonal projection onto $\TrB\CS$ has Choi matrix
	\begin{equation}
		\mathrm{Choi}(\mathcal{P}_{\TrB\CS}) = \sum_{\jmdstack{0\leq i,j\leq 2}{(0,1)\neq (i,j)\neq (1,0)}}\ketbra{i}{j}\otimes\ketbra{i}{j}.
	\end{equation}
	This matrix has eigenvector $-\ket{00}-\ket{11}+\sqrt{2}\ket{22}$ with eigenvalue $1-\sqrt{2}<0$, so it is not positive, and therefore $\mathcal{P}_{\TrB\CS}$ is not CP.  This demonstrates that there exist self-adjoint subspaces $\TrB\CS\subset\BHS$ containing the identity and which are not $C^{\ast}$ subalgebras for which the orthogonal projection $\mathcal{P}_{\TrB\CS}$ is not CP.  The zero-extended assignment map $\tilde{\AM} = \AM_{\CS}\circ\mathcal{P}_{\TrB\CS}$ is simply $\tilde{A}(\sigma) = (\mathcal{P}_{\TrB\CS}(\sigma))\otimes\rho$.  Therefore, for any $X\in\BHSW$, 
	\begin{equation}
		\tilde{\AM}\otimes\id_{\BHW}(X) = (\mathcal{P}_{\TrB\CS}\otimes\id(X))\otimes\rho,
	\end{equation}
	so that $\tilde{\AM}$ is not CP because $\mathcal{P}_{\TrB\CS}$ is not CP, and therefore $\AM_{\CS}$ is not CPZE.  However, for any $X\in\TrB\CS\otimes\BHW$, 
	\begin{equation}
		\AM_{\CS}\otimes\id(X) = X\otimes\rho,
	\end{equation}
	which is a positive map for all finite-dimensional witnesses, so that $\AM_{\CS}$ is CP.
\end{proof}

One observation to make about $\Sgp$-consistency, that will be useful in applying these  notions of complete positivity, is that, when a $\Sgp$-consistent subspace $\CS$ is tensored with the operator algebra of a ``witness system'' as in the definition of complete positivity above, the resulting operator space is still appropriately consistent.  In other words, $\Sgp$-consistency is stable in the following sense:
\begin{mylemma}
	\label{lem:GConsistStability}
	If $\Sgp\subset\USB$ and $\CS\subset\BHSB$ is a $\Sgp$-consistent subspace, then for any finite-dimensional Hilbert space $\HW$, $\CS\otimes \BHW\subset \BL(\HS\otimes\HB\otimes \HW)$ is a $\Sgp\otimes\UW$-consistent subspace.  In particular, it is a $\Sgp\otimes\identity_{\HW}$-consistent subspace.
\end{mylemma}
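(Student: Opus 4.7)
The plan is to verify directly that $\CS\otimes\BHW$ satisfies the three defining properties of a $(\Sgp\otimes\UW)$-consistent subspace listed in Section~\ref{sec:GconsistLinSp}: it is a $\mathbb{C}$-linear subspace, it is spanned by states, and it is $(\Sgp\otimes\UW)$-consistent. The first two will be immediate from the construction: a tensor product of linear subspaces is a linear subspace, and since by hypothesis $\CS = \Span_{\mathbb{C}}(\CS\cap\DSB)$ and $\BHW = \Span_{\mathbb{C}}\DW$, any element of $\CS\otimes\BHW$ is a complex combination of product states $\rho\otimes\sigma$, each of which is itself a density matrix on $\HS\otimes\HB\otimes\HW$.

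The only step requiring real work is $(\Sgp\otimes\UW)$-consistency, and the plan there is to reduce it coordinate-wise to the $\Sgp$-consistency of $\CS$. I would fix a basis $\{W_{\alpha}\}$ of $\BHW$ and expand arbitrary $X, Y \in \CS\otimes\BHW$ uniquely as $X = \sum_{\alpha} C_{\alpha}\otimes W_{\alpha}$ and $Y = \sum_{\alpha} D_{\alpha}\otimes W_{\alpha}$ with $C_{\alpha}, D_{\alpha}\in\CS$. Because the partial trace over $\HB$ acts only on the first tensor factor, the hypothesis $\TrB X = \TrB Y$ together with linear independence of the $W_{\alpha}$ forces $\TrB C_{\alpha} = \TrB D_{\alpha}$ for every $\alpha$. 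The hypothesis on $\CS$ then yields $\TrB(U C_{\alpha} U^{\dag}) = \TrB(U D_{\alpha} U^{\dag})$ for all $U\in\Sgp$ and all $\alpha$. For any $U\otimes W\in\Sgp\otimes\UW$, conjugation by this elementary tensor distributes through the sum and keeps the $W_{\alpha}$ factor separate, giving
$$\TrB\bigl((U\otimes W) X (U\otimes W)^{\dag}\bigr) = \sum_{\alpha} \TrB(U C_{\alpha} U^{\dag})\otimes W W_{\alpha} W^{\dag},$$
and likewise for $Y$. The right-hand sides agree term by term, establishing the required consistency. The ``in particular'' clause is then the specialization $W = \identity_{\HW}$.

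No genuine obstacle is anticipated: the argument rests on the fact that both $\TrB$ and conjugation by elementary tensors act separately on the $\CS$ and $\BHW$ factors, so consistency of the tensored space reduces to coordinate-wise consistency of $\CS$. The only subtle point is bookkeeping --- in the extended setting $\HS\otimes\HB\otimes\HW$, the ``system'' for complete-positivity purposes is $\HS\otimes\HW$ and the ``bath'' remains $\HB$, so the partial trace referenced throughout is only over $\HB$, and the semigroup $\Sgp\otimes\UW$ acts as unitaries on the composite $\HSB\otimes\HW$ in the obvious way.
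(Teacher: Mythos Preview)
Your proposal is correct and follows essentially the same approach as the paper: verify the three defining properties, with the only nontrivial one being consistency, which you reduce coordinate-wise to the $\Sgp$-consistency of $\CS$ by expanding in the $\BHW$ factor. The only cosmetic difference is that the paper uses the equivalent kernel formulation of consistency (showing $(\Sgp\otimes\UW)\cdot(\CS_{0}\otimes\BHW)\subset\ker\TrB$ directly, with $X\in\CS_{0}\otimes\BHW$ playing the role of your $X-Y$), whereas you work with two elements $X,Y$ having equal partial trace; the underlying computation is the same.
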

\begin{proof}
	First, observe that, if $\CS\in\BHSB$ is a $\Sgp$-consistent subspace, then it is $\mathbb{C}$-linear and spanned by states.  It follows that $\CS\otimes\BHW$ is $\mathbb{C}$-linear.  Furthermore, since $\CS$ and $\BHW$ are each spanned by states, $\CS\otimes\BHW$ is spanned by states of the form $\rho\otimes\sigma$ with $\rho\in\CS\cap\DSB$ and $\sigma\in\DW$.\\
	\indent It remains only to prove $\Sgp\otimes\UW$-consistency.  To that end, let $X\in\ker\left(\TrB\big|_{\CS\otimes\BHW}\right) = \CS_{0}\otimes \BHW$, $U\in\Sgp$, and $V\in\UW$.  $X$ may be expanded as $X = \sum_{i}Y_{i}\otimes Z_{i}$ where $\{Y_{i}\}\subset\CS_{0}$ and $\{Z_{i}\}\subset\BHW$.  Then
	\begin{align}
		\TrB\big[(U\otimes V)X(U\otimes V)^{\dag}\big] & = \sum_{i}\TrB(U Y_{i}U^{\dag})\otimes (V Z_{i}V^{\dag}) \nonumber\\
		& = 0
	\end{align}
	 since $Y_{i}\in\CS_{0}$ and $\CS$ is $\Sgp$-consistent.  Then $(U\otimes V)X(U\otimes V)^{\dag}\in\ker\left(\TrB\big|_{\BHSBW}\right)$, so $\CS\otimes\BHW$ is $\Sgp\otimes\UW$-consistent.  It follows trivially that $\CS\otimes\BHW$ is $\Sgp\otimes\identity_{\HW}$-consistent.
\end{proof}

\subsection{Defining Completely Positive Assignment Maps}
\label{sec:CPassignment}
Let $\CS$ be a $\Sgp$-consistent subspace, let $\mathcal{V}_{0} = \ker\big(\TrB\big|_{\CS}\big)$, and let $\HW$ be a finite-dimensional Hilbert space.  We would like to apply the above notions of complete positivity not only to the dynamical maps $\Psi_{U}$, but also to the assignment map $\AM_{\CS}:\TrB\CS\to\CS/\CS_{0}$.  To do so, we will need to identify the positive elements within $\CS/\CS_{0}\otimes \BL(\HH_{W})$.  To that end, we begin by establishing a natural isomorphism between $\CS/\CS_{0}\otimes \BL(\HH_{W})$ and a space in which the positive elements are readily identifiable.
\begin{mylemma}
	\label{lem:tensoredIsomorphism}
	The space $\CS/\CS_{0}\otimes\BHW$ is naturally isomorphic to $\CS\otimes\BHW/\CS_{0}\otimes\BHW$.
\end{mylemma}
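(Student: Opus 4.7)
The plan is to exhibit the natural isomorphism explicitly as a pair of mutually inverse $\mathbb{C}$-linear maps, constructed via the universal properties of the tensor product and the quotient. Since $\HW$ is finite-dimensional, all tensor products are the ordinary algebraic ones, and the lemma is essentially the standard fact that $-\otimes\BHW$ is exact on $\mathbb{C}$-vector spaces, so that tensoring the short exact sequence $0\to\CS_{0}\to\CS\to\CS/\CS_{0}\to 0$ by $\BHW$ produces $0\to\CS_{0}\otimes\BHW\to\CS\otimes\BHW\to(\CS/\CS_{0})\otimes\BHW\to 0$.

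First I would construct the forward map $\phi:\CS/\CS_{0}\otimes\BHW\to(\CS\otimes\BHW)/(\CS_{0}\otimes\BHW)$. The bilinear assignment $(X+\CS_{0},Z)\mapsto X\otimes Z + \CS_{0}\otimes\BHW$ is well-defined on the quotient, since if $X-X'\in\CS_{0}$ then $(X-X')\otimes Z\in\CS_{0}\otimes\BHW$, so the coset is independent of the choice of representative. The universal property of the tensor product then promotes this bilinear map to the desired $\mathbb{C}$-linear map $\phi$.

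For the inverse, I would start with the bilinear map $\CS\times\BHW\to(\CS/\CS_{0})\otimes\BHW$ defined by $(X,Z)\mapsto(X+\CS_{0})\otimes Z$, which extends by the universal property of the tensor product to a linear map $\tilde{\gamma}:\CS\otimes\BHW\to(\CS/\CS_{0})\otimes\BHW$. This $\tilde{\gamma}$ vanishes on $\CS_{0}\otimes\BHW$, because for $X\in\CS_{0}$ the class $X+\CS_{0}$ is zero in the quotient, and therefore $\tilde{\gamma}$ descends to a linear map $\psi:(\CS\otimes\BHW)/(\CS_{0}\otimes\BHW)\to(\CS/\CS_{0})\otimes\BHW$.

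Finally I would check that $\phi\circ\psi$ and $\psi\circ\phi$ agree with the identity on elementary tensors of the form $(X+\CS_{0})\otimes Z$ and $X\otimes Z + \CS_{0}\otimes\BHW$; since these span the two spaces, the identities follow. Naturality, i.e.\ independence of any basis choice, is automatic from the construction via universal properties. No step presents a real obstacle: the content of the lemma is purely algebraic, and the finite-dimensionality of $\HW$ ensures there are no topological subtleties about completed tensor products that might otherwise arise in the $C^{*}$-algebra setting.
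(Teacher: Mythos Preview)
Your proof is correct and follows essentially the same route as the paper: both arguments boil down to the map $p\otimes\id:\CS\otimes\BHW\to(\CS/\CS_{0})\otimes\BHW$ (your $\tilde{\gamma}$, the paper's $\tilde{h}$) and the observation that it factors through the quotient by $\CS_{0}\otimes\BHW$. The only cosmetic difference is that the paper verifies the isomorphism by computing $\ker\tilde{h}=\CS_{0}\otimes\BHW$ directly (writing $X=\sum_i Y_i\otimes Z_i$ with the $Z_i$ linearly independent) and invoking the first isomorphism theorem, whereas you construct the explicit inverse $\phi$ and check both compositions on elementary tensors; these are two interchangeable ways to finish the same standard linear-algebra argument.
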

\begin{proof}
We may construct a natural isomorphism 
	\begin{equation}
		h:\big(\CS\otimes\BHW\big)/\big(\CS_{0}\otimes\BHW\big) \to \CS/\CS_{0}\otimes\BHW
	\end{equation}
	as follows.  Let $p:\CS\to\CS/\CS_{0}$ be the natural projection map and let $\tilde{h} := p\otimes\id:\CS\otimes\BHW\to\CS/\CS_{0}\otimes \BHW$ be the $\mathbb{C}$-linear map such that $\tilde{h}(Y\otimes Z) = p(Y)\otimes Z$ for any $Y\in\CS$ and $Z\in\BHW$.  Any $X\in\CS\otimes\BHW$ can be written $X = \sum_{i}Y_{i}\otimes Z_{i}$, where the $\{Z_{i}\}\subset \BHW$ are linearly independent.  Then $\tilde{h}(X) = \sum_{i}p(Y_{i})\otimes Z_{i}$ so $\tilde{h}(X) = 0$ if and only if $p(Y_{i}) = 0$ for all $i$, i.e., if and only if $Y_{i}\in\CS_{0}$ for all $i$.  Thus $\ker\tilde{h} = \CS_{0}\otimes\BHW$, and since $\tilde{h}$ is clearly surjective, we can define the isomorphism $h:\CS\otimes\BHW/\CS_{0}\otimes\BHW\to\CS/\CS_{0}\otimes\BHW$ by the commutative diagram in Figure \ref{fig:isomorphismCD}.

	\begin{figure}
		\tikzimageswap{\includegraphics{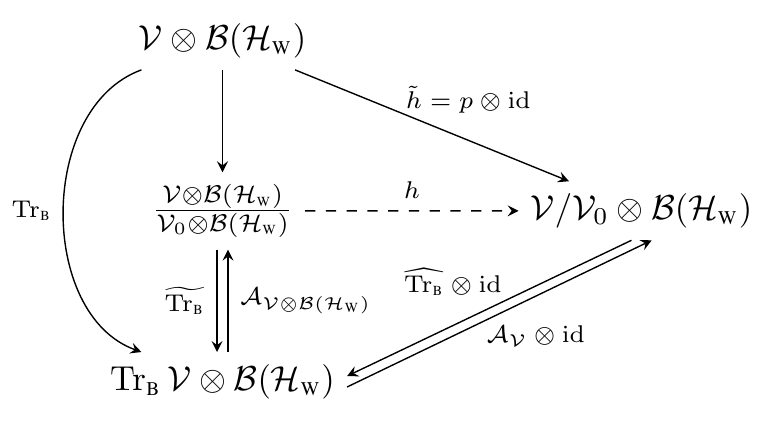}}{
		\tikzsetnextfilename{figure5}
		\begin{tikzpicture}
			\matrix (m) [matrix of math nodes, ampersand replacement=\&,row sep=3em,column sep=5em,minimum width=2em]
			{
				\CS\otimes\BHW \&  \\
		  		\frac{\CS\otimes\BHW}{\CS_{0}\otimes\BHW} \& \CS/\CS_{0}\otimes\BHW\\
		  		\TrB\CS\otimes\BHW \& \\
		  	};
		  	\path[-stealth,font=\scriptsize]
			(m-1-1) edge node [left] {} (m-2-1)
			(m-1-1) edge node [above] {$\hspace{3em}\tilde{h}=p\otimes\id$} (m-2-2)
			(m-2-1) edge [dashed] node [above] {$h$} (m-2-2)
			([xshift=-0.35ex]m-2-1.south) edge node [left] {$\widetilde{\TrB}$} ([xshift=-0.35ex]m-3-1.north)
			([xshift=0.35ex]m-3-1.north) edge node [right] {$\AM_{\CS\otimes\BHW}$} ([xshift=0.35ex]m-2-1.south)
			(m-1-1) edge [bend right=70] node [left] {$\TrB$} (m-3-1)
			([yshift=-0.35ex]m-3-1.east) edge [bend right=0] node [below] {$\hspace{3em}\AM_{\CS}\otimes\id$} ([xshift=0.65ex]m-2-2.south)
			([xshift=-0.65ex]m-2-2.south) edge [bend left=0] node [above] {$\hspace{-3em}\widehat{\TrB}\otimes\id$} ([yshift=0.35ex]m-3-1.east);
		\end{tikzpicture}}
		\caption{Commutative diagram defining the isomorphism $h:\CS\otimes\BHW/\CS_{0}\otimes\BHW\to\CS/\CS_{0}\otimes\BHW$.  Also depicted is the relationship between the assignment maps $\AM_{\CS\otimes\BHW}$ and $\AM_{\CS}$, namely, $\AM_{\CS}\otimes\id_{\BHW} = h\circ \AM_{\CS\otimes\BHW}$. $\widetilde{\TrB}$ and $\widehat{\TrB}$ are the unique maps satisfying the determination problems of $\Tr_{B}$ by the projections $\CS\otimes\BHW\to\CS\otimes\BHW/\CS_{0}\otimes\BHW$ and $\CS\to\CS/\CS_{0}$, respectively.}
		\label{fig:isomorphismCD}
	\end{figure}	
\end{proof}
Furthermore,  it is easy to see that if $\AM_{\CS}$ and $\AM_{\CS\otimes\BHW}$ are the assignment maps associated with $\CS$ and $\CS\otimes\BHW$, then $ \AM_{\CS}\otimes\id_{\BHW} = h\circ\AM_{\CS\otimes\BHW}$, where $h:\big(\CS\otimes\BHW\big)/\big(\CS_{0}\otimes\BHW\big) \to \CS/\CS_{0}\otimes\BHW$ is the isomorphism described in Lemma \ref{lem:tensoredIsomorphism}.

Now, observe that 
\begin{equation}
	\ker\left(\TrB\big|_{\CS\otimes\BHW}\right) = \CS_{0}\otimes\BHW.
\end{equation}  
By invoking Lemma \ref{lem:tensoredIsomorphism}, in the same way that we defined a positive cone $(\CS/\CS_{0})^{+}$ comprising those affine subspaces in $\CS/\CS_{0}$ that each contain at least one positive operator, we are now able to describe a positive cone in $\CS/\CS_{0}\otimes \BHW$, thereby defining a  matrix ordering \cite{Choi:1977} for $\CS/\CS_{0}$.
\begin{mydefinition}
	An element in $\CS/\CS_{0}\otimes\BHW$ will be considered positive, and therefore belonging to $(\CS/\CS_{0}\otimes\BHW)^{+}$, if the corresponding affine subspace in $\big(\CS\otimes\BHW\big)/\big(\CS_{0}\otimes\BHW\big)$ contains at least one positive operator.
\end{mydefinition}
\begin{mylemma}
	This matrix ordering of $\CS/\CS_{0}$ (and the analogous matrix ordering for $\BHSB/\ker \TrB$) is such that all maps in Figure \ref{fig:commutativeDiagram2} are completely positive, with the exception of $\AM_{\CS}$ and $\Psi_{U}^{\CS}$.  Indeed, it is the minimal matrix ordering necessary for this (minimal in the sense that the fewest elements of $\CS/\CS_{0}\otimes\BHW$ are considered ``positive'').
\end{mylemma}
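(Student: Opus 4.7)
The plan is to verify complete positivity (CP) map-by-map, then deduce minimality from the fact that the natural projection alone already forces the chosen cone. Throughout, I invoke the identification $\CS/\CS_{0} \otimes \BHW \cong (\CS \otimes \BHW)/(\CS_{0} \otimes \BHW)$ from Lemma \ref{lem:tensoredIsomorphism}, so that ``positive in $\CS/\CS_{0} \otimes \BHW$'' means ``possesses a positive representative in $\CS \otimes \BHW$''; the analogous reading applies with $\ker\TrB$ in place of $\CS_{0}$.

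First, the natural projections $p_{\CS}:\CS \to \CS/\CS_{0}$ and $p:\BHSB \to \BHSB/\ker\TrB$ are trivially CP: for any $Y\geq 0$ in $\CS \otimes \BHW$, the image $(p_{\CS}\otimes\id)(Y) = Y + \CS_{0} \otimes \BHW$ has $Y$ itself as a positive representative, so it lies in the positive cone by definition. For $\widehat{\TrB}:\CS/\CS_{0}\to\TrB\CS$ and $\TrB^{0}:\BHSB/\ker\TrB\to\BHS$, a positive coset has a positive representative $Y$, and since the map is well-defined, its image equals $\TrB(Y)\geq 0$, which is positive because the partial trace is CP on the full algebra; the tensored versions go through identically via Lemma \ref{lem:tensoredIsomorphism}. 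For $\Ad_{U}^{0}:\CS/\CS_{0}\to\BHSB/\ker\TrB$, given a positive representative $Y$ of $X + \CS_{0}$, $\Sgp$-consistency (property \ref{item:GconsLinSp}) gives $U(X-Y)U^{\dag}\in\ker\TrB$, so $UYU^{\dag}\geq 0$ is a positive representative of $\Ad_{U}^{0}(X+\CS_{0})=UXU^{\dag}+\ker\TrB$. Lemma \ref{lem:GConsistStability} ensures $\CS\otimes\BHW$ is $\Sgp\otimes\identity_{\HW}$-consistent, so the same reasoning yields CP.

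For minimality, observe that in order for the natural projection $p_{\CS}\otimes\id:\CS\otimes\BHW\to\CS/\CS_{0}\otimes\BHW$ to be positive under \emph{any} candidate matrix ordering, every coset of the form $Y + \CS_{0}\otimes\BHW$ with $Y\geq 0$ must be declared positive. But these are \emph{precisely} the cosets declared positive under our definition. Hence any matrix ordering making $p_{\CS}$ (let alone all the listed maps) CP must contain ours, so ours is minimal. The analogous argument applies to the ordering on $\BHSB/\ker\TrB$. That $\AM_{\CS}$ and $\Psi_{U}^{\CS}$ can genuinely fail to be CP under this ordering is already demonstrated by Example \ref{ex:ChoiEffrosCounterEx}.

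I expect the only subtle point is confirming that $\Sgp$-consistency is exactly what is required both to make $\Ad_{U}^{0}$ well-defined on cosets \emph{and} to guarantee that a positive representative is carried to a positive representative. Once this is isolated, each CP verification collapses to a one-line appeal to positivity of $\TrB$ or $\Ad_{U}$, and the minimality claim follows from observing that the projection imposes the cone on us with no room to spare.
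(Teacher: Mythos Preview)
Your argument is correct and mirrors the paper's proof essentially step for step: both lift a positive coset to a positive representative, push it through the manifestly CP maps $\Ad_{U}$ and $\TrB$ upstairs, and observe that the projection's positivity is forced by (and hence minimal under) the very definition of the cone. One small slip: Example~\ref{ex:ChoiEffrosCounterEx} exhibits an assignment map that \emph{is} CP (though not CPZE), so it does not witness failure of CP for $\AM_{\CS}$; Example~\ref{ex:consistentPosCounterEx} would be the apt reference, though the lemma does not actually require such a witness.
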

\begin{proof}
	Fix some finite-dimensional $\HW$ and let $B\in(\CS/\CS_{0}\otimes\BHW)^{+}$.  It follows that $h^{-1}(B)$ is positive in $\CS\otimes\BHW/\CS_{0}\otimes\BHW$ and therefore that there exists a positive $A\in\CS\otimes\BHW$ such that $p\otimes\id(A) = B$, where $p:\CS\to\CS/\CS_{0}$ is the natural projection map.  So the positive cone $(\CS/\CS_{0}\otimes\BHW)^{+}$ is precisely the image through $p\otimes\id$ of the positive cone $(\CS\otimes\BHW)^{+}$.  Consequently, $(\CS/\CS_{0}\otimes\BHW)+$ is the minimal positive cone (for all $\HW$) to make $p:\CS\to\CS/\CS_{0}$ a completely positive map.  As we have given $\BHSB/\ker\TrB$ the analogous matrix ordering, the same is true of the natural projection $p':\BHSB\to\BHSB/\ker\TrB$.  Thus, minimality is proved, and it only remains to examine complete positivity of the other maps. 
	
	Suppose $B\in\CS/\CS_{0}\otimes\BHW$ is positive.  Then there exists $A\in\CS\otimes\BHW$ which is positive and is such that $p\otimes\id(A) = B$.  Complete positivity of $\Ad_{U}$ and $p'$ imply that $(p'\otimes\id)\circ(\Ad_{U}\otimes\id)(A) = (p'\circ\Ad_{U})\otimes\id(A)$ is positive.  But $(p'\circ\Ad_{U})\otimes\id(A) = (\Ad_{U}^{0}\circ p)\otimes\id(A) = \Ad_{U}^{0}\otimes\id(B)$, so that $\Ad_{U}^{0}\otimes\id$ is a positive map for all $\HW$, and therefore $\Ad_{U}^{0}$ is completely positive.  Likewise, $\TrB\otimes\id(A) = \widehat{\TrB}\otimes\id(B)$ is positive, so that $\widehat{\TrB}$ is completely positive.  And complete positivity of $\TrB^{0}$ is proved analogously: for any positive $B\in\BHSB/\ker\TrB$ there exists positive $A\in (p'\otimes\identity)^{-1}(B)$, and therefore $\TrB^{0}\otimes\id(B) = \TrB\otimes\id(A)$ is positive, so that $\TrB{0}$ is completely positive.
\end{proof}

Then $\AM_{\CS}:\TrB\CS\to\CS/\CS_{0}$ is completely positive if and only if $\AM_{\CS}\otimes\id_{\BHW}:(\TrB\CS)\otimes\BHW\to\CS/\CS_{0}\otimes\BHW$ is positive for all finite-dimensional $\HW$, i.e., if and only if $\AM_{\CS}\otimes\id_{\BHW}$ maps positive operators in $\TrB(\CS)\otimes\BHW$ to elements in $(\CS/\CS_{0}\otimes\BHW)^{+}$.  Equivalently, $\AM_{\CS}$ is completely positive if and only if $\TrB\big(\DSBW\cap\big[\CS\otimes\BHW\big]\big) = \DSW\cap\big[(\TrB\CS)\otimes\BHW\big]$ for all finite dimensional $\HW$, i.e., if and only if every system-witness state in $(\TrB\CS)\otimes \BHW$ is ``covered'' by a system-bath-witness state in $\CS\otimes\BHW$.

We summarize the different notions of complete positivity and their interrelations in Fig.~\ref{fig:Venn-CP}.

\begin{figure}[t]
	\scalebox{0.8}{
	\tikzimageswap{\includegraphics{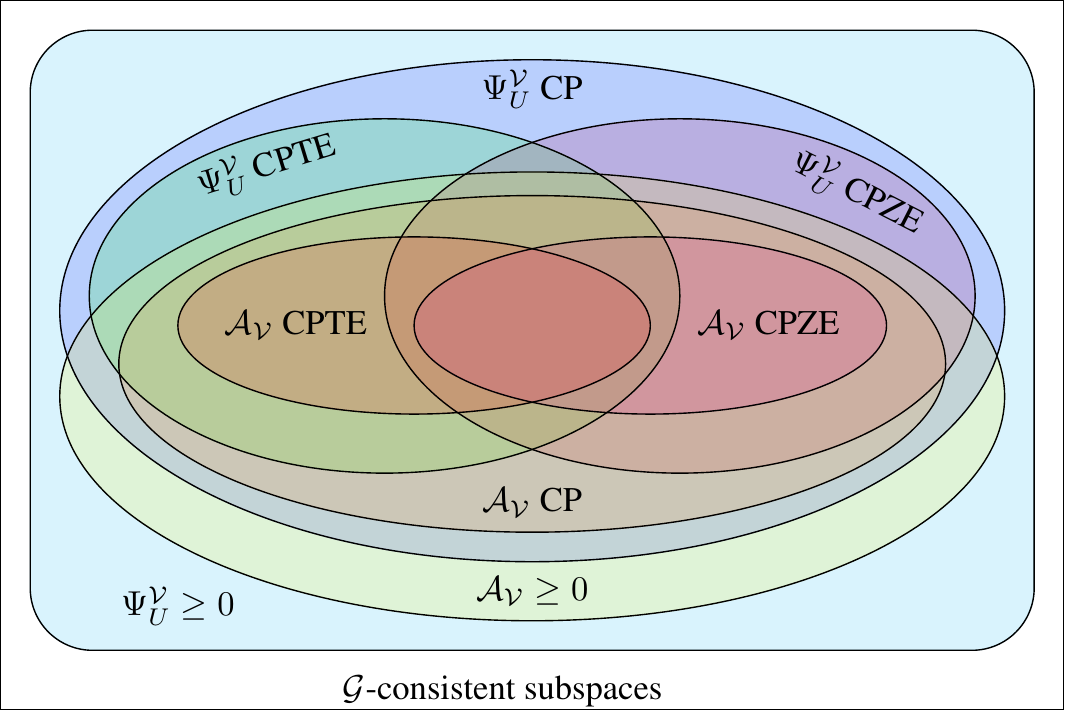}}{
	\tikzsetnextfilename{figure6}
	\begin{tikzpicture}[scale=.6]
		\draw (0,0) rectangle (18, 12);
		\draw (8.5,0.3) node {$\Sgp$-consistent subspaces};
		\filldraw[rounded corners=4ex, opacity = 0.15,  color = cyan] (.5,1) rectangle (17.5,11.5);
		\filldraw [opacity = 0.15,  color = blue] (9,6.75) ellipse [x radius=8, y radius=4.25]; 
		\filldraw [opacity = 0.15,  color = green] (6.5,7) ellipse [x radius=5., y radius=3]; 
		\filldraw [opacity = 0.15,  color = purple] (11.5,7) ellipse [x radius=5., y radius=3]; 
		\filldraw [opacity = 0.15,  color = yellow] (9.,5.3) ellipse [x radius=8, y radius=3.8]; 
		\filldraw [opacity = 0.15,  color = orange] (9,5.85) ellipse [x radius=7, y radius=2.85]; 
		\filldraw [opacity = 0.15,  color = red] (7,6.5) ellipse [x radius=4., y radius=1.5]; 
		\filldraw [opacity = 0.15,  color = magenta] (11,6.5) ellipse [x radius=4., y radius=1.5]; 
		\draw[rounded corners=4ex] (.5,1) rectangle (17.5,11.5);
		\draw (3.0,1.75) node {$\Psi_{U}^{\CS}\geq 0$};
		\draw (9,6.75) ellipse [x radius=8, y radius=4.25]; 
		\draw (9,10.5) node[rotate=0] {$\Psi_{U}^{\CS}$ CP};
		\draw (6.5,7) ellipse [x radius=5., y radius=3]; 
		\draw (4.5,9.2) node[rotate=17] {$\Psi_{U}^{\CS}$ CPTE};
		\draw (11.5,7) ellipse [x radius=5., y radius=3]; 
		\draw (14.5,8.75) node[rotate=-27] {$\Psi_{U}^{\CS}$ CPZE};
		\draw (9.,5.3) ellipse [x radius=8, y radius=3.8]; 
		\draw (9,2.) node[rotate=0] {$\AM_{\CS}\geq 0$};
		\draw (9,5.85) ellipse [x radius=7, y radius=2.85]; 
		\draw (9,3.5) node[rotate=0] {$\AM_{\CS}$ CP};
		\draw (7.,6.5) ellipse [x radius=4., y radius=1.5]; 
		\draw (5,6.5) node[rotate=0] {$\AM_{\CS}$ CPTE};
		\draw (11,6.5) ellipse [x radius=4., y radius=1.5]; 
		\draw (13,6.5) node {$\AM_{\CS}$ CPZE};
	\end{tikzpicture} } }
	\caption{Venn diagram of the set of all $\Sgp$-consistent subspaces $\CS$ (for some fixed $\Sgp$) satisfying (complete) positivity conditions for the dynamical maps $\Psi_{U}^{\CS}$ or for the assignment map $\AM_{\CS}$.  Note that the sets of $\Sgp$-consistent subspaces satisfying (complete) positivity for the dynamical maps $\Psi_{U}^{\CS}$ are the sets of spaces which satisfy the stated condition for all $U\in\Sgp$.  It is not known whether all of the indicated subsets are non-empty.   For example, it is not known if there exist $\Sgp$-consistent subspaces $\CS$ for which the assignment map $\AM_{\CS}$ is positive but not CP.}
	\label{fig:Venn-CP}
\end{figure}

Determining which $\Sgp$-consistent subspaces give rise to completely positive (CP, CPTE, or CPZE) assignment maps is a rich and complex problem.  However, as we will discuss later, when $\TrB\CS = \BHS$, a theorem of Pechukas \cite{Pechukas:1994} generalized by Jordan, \textit{et al.}, \cite{Jordan:2004}, shows that the assignment map for a $\USB$-consistent subspace $\CS$ is positive if and only if it is CPZE and $\CS$ is of the form $\CS = \BHS\otimes\rhoB$ for some fixed $\rhoB\in\DB$.  We now show that in the simple case where $\TrB\CS$ is one-dimensional, the question of complete positivity of the assignment map may be answered comprehensively.

\begin{mylemma}
	\label{lem:oneDimTrBV}
	If $\CS\subset\BHSB$ is a $\Sgp$-consistent subspace such that $\TrB\CS$ is 1-dimensional, then the assignment map $\AM_{\CS}:\TrB\CS\to\CS/\CS_{0}$ is CPZE.
\end{mylemma}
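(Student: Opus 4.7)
The plan is to exploit the highly constrained structure of a one-dimensional, state-spanned subspace and simply verify complete positivity by direct computation. Because $\CS$ is $\Sgp$-consistent it is spanned by states, so $\TrB\CS$ is spanned by elements of $\DS$; a one-dimensional such subspace must equal $\mathbb{C}\rhoS$ for a single fixed density matrix $\rhoS\in\DS$, and I can pick a fiducial element $\rho^{0}_{\textsc{sb}}\in\CS\cap\DSB$ with $\TrB\rho^{0}_{\textsc{sb}}=\rhoS$. This single positive representative will carry the whole argument.

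Next I would write down the Hilbert--Schmidt orthogonal projection onto the line $\mathbb{C}\rhoS$:
\begin{equation}
  \mathcal{P}_{\TrB\CS}(X)=\frac{\Tr(\rhoS X)}{\Tr(\rhoS^{2})}\,\rhoS,
\end{equation}
and establish the key intermediate claim that $\mathcal{P}_{\TrB\CS}$ is itself completely positive. Expanding $Y=\sum_{\alpha}A_{\alpha}\otimes B_{\alpha}\in\BHS\otimes\BHW$ and reorganising,
\begin{equation}
  (\mathcal{P}_{\TrB\CS}\otimes\id_{\BHW})(Y)=\frac{1}{\Tr(\rhoS^{2})}\,\rhoS\otimes\Tr_{\textsc{s}}\!\bigl[(\rhoS\otimes\identity)Y\bigr].
\end{equation}
A one-line calculation using cyclicity of the full trace gives $\Tr_{\textsc{s}}[(\rhoS\otimes\identity)Y]=\Tr_{\textsc{s}}[(\rhoS^{1/2}\otimes\identity)Y(\rhoS^{1/2}\otimes\identity)]$, which is manifestly positive whenever $Y\geq 0$ since the partial trace preserves positivity. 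The whole output is then a tensor product of positive operators, hence positive, so $\mathcal{P}_{\TrB\CS}$ is CP.

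The final step composes with the assignment map. On the one-dimensional domain, $\AM_{\CS}(c\,\rhoS)=c\,(\rho^{0}_{\textsc{sb}}+\CS_{0})$, so for any $Y\geq 0$ the output of $(\AM_{\CS}\circ\mathcal{P}_{\TrB\CS})\otimes\id_{\BHW}$ corresponds, via the isomorphism $h$ of Lemma~\ref{lem:tensoredIsomorphism}, to the coset $\rho^{0}_{\textsc{sb}}\otimes Z+(\CS_{0}\otimes\BHW)$, where $Z:=\Tr_{\textsc{s}}[(\rhoS\otimes\identity)Y]/\Tr(\rhoS^{2})\geq 0$ by the previous step. Since $\rho^{0}_{\textsc{sb}}\otimes Z$ is a genuine positive operator in $\CS\otimes\BHW$, this coset is positive in the matrix ordering of $(\CS\otimes\BHW)/(\CS_{0}\otimes\BHW)$, and hence its image is positive in $(\CS/\CS_{0})\otimes\BHW$. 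Therefore $\AM_{\CS}\circ\mathcal{P}_{\TrB\CS}$ is CP, which is precisely the CPZE condition for $\AM_{\CS}$.

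I do not anticipate a serious obstacle here: the only subtlety is carrying the matrix ordering on $\CS/\CS_{0}$ correctly across the quotient, which Lemma~\ref{lem:tensoredIsomorphism} packages cleanly. The entire argument hinges on the special fact that a one-dimensional state-spanned subspace is generated by a \emph{single} positive operator $\rhoS$, which makes its Hilbert--Schmidt projection behave as nicely as possible---in sharp contrast to Example~\ref{ex:ChoiEffrosCounterEx}, where the analogous orthogonal projection onto a higher-dimensional state-spanned subspace fails to be CP.
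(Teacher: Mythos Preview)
Your proof is correct and follows essentially the same argument as the paper's own proof: both pick a single state $\rho\in\CS\cap\DSB$ over the line $\TrB\CS=\mathbb{C}\rhoS$, write the zero extension as $X\mapsto \frac{\Tr(\rhoS X)}{\Tr(\rhoS^{2})}\rho+\CS_{0}$, and verify complete positivity of the tensored map using the $\sqrt{\rhoS}$-sandwich identity $\Tr_{\textsc{s}}[(\rhoS\otimes\identity)Y]=\Tr_{\textsc{s}}[(\sqrt{\rhoS}\otimes\identity)Y(\sqrt{\rhoS}\otimes\identity)]$. The only cosmetic difference is that you factor the argument through an explicit proof that $\mathcal{P}_{\TrB\CS}$ is CP before composing, whereas the paper computes $\tilde{\AM}\otimes\id$ directly; the underlying calculation is identical.
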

\begin{proof}
	Let $\rho\in\DSB\cap\CS$ and $\rhoS=\TrB \rho = \DS\cap\TrB\CS$, so that $\TrB\CS = \mathbb{C}\rhoS$.  The assignment map is then $\AM_{\CS}(z\rhoS) = z\rho + \CS_{0}$ for any $z\in\mathbb{C}$.  Let $\tilde{\AM}:\BHS\to\BHSB/\CS_{0}$ be the zero-extension of $\AM_{\CS}$, i.e., for any $X\in\BHS$
	\begin{equation}
		\tilde{\AM}(X) = \frac{\Tr(\rhoS X)}{\Tr(\rhoS^{2})}\rho + \CS_{0}.
	\end{equation}
	Let $\HW$ be a finite-dimensional witness Hilbert space.  Then, for any  $Y\in\BHSW$, 
	\begin{align}
		h^{-1}\circ\tilde{\AM}\otimes\id(Y) & = \rho\otimes \frac{\Tr_{\textsc{s}}[(\sqrt{\rhoS}\otimes\identity)Y(\sqrt{\rhoS}\otimes\identity)]}{\Tr(\rhoS^{2})}\nonumber\\
		& \qquad  + \CS_{0}\otimes\BHW,
	\end{align}
	where $h:\BHSB\otimes\BHW/\CS_{0}\otimes\BHW\to\BHSB/\CS_{0}\otimes\BHW$ is the isomorphism constructed analogously to that in Lemma \ref{lem:tensoredIsomorphism}.  It follows that $\tilde{\AM}\otimes \id_{\BHW}$ is a positive map for any finite-dimensional witness $\HW$.  Thus, $\AM_{\CS}$ is CPZE.
\end{proof}
It is an open problem to find some generalizations of these results to arbitrary $\Sgp$-consistent subspaces.  In particular, Example \ref{ex:ChoiEffrosCounterEx} shows that Lemma \ref{lem:oneDimTrBV} does not extend to higher dimensional $\TrB\CS$ without modification.

\subsection{Why Complete Positivity?}
\label{sec:whyCP}
Definition~\ref{def:CPmap} advocates different notions of complete positivity. In order to properly address the question of complete positivity of system dynamics, we must first ask ourselves: what aspects of complete positivity make it a physically interesting property?  There may be several valid answers to this question and it is possible that there will be no general consensus as to which is most compelling.  Here are two:
\begin{enumerate}
	\item The existence of a non-interacting, non-evolving ``witness'' (for which the initial system-witness state can be entangled) should not be enough to break the positivity of the evolution.  This is frequently claimed as a mandate for complete positivity.\label{item:witnessInterp}
	\item As mentioned above, CP maps are sometimes representable by Kraus OSRs $\Phi(\rho) = \sum E_{i}\rho E_{i}^{\dag}$, and Kraus OSRs are ubiquitous in quantum information theory.  For example, such an OSR can be thought of as a convex combination $\Phi(\rho)=\sum p_{i}\rho_{i}$, where $p_{i} = \Tr(E_{i}\rho E_{i}^{\dag})$ and $\rho_{i} = E_{i}\rho E_{i}^{\dag}/p_{i}$.  As such, it offers a convenient interpretation as a non-selective ``measurement'' of the system by the environment, yielding outcome state $\rho_{i}$ with probability $p_{i}$.\label{item:krausOSRInterp}
\end{enumerate}

\subsubsection{Witnessed Positivity}
If the importance of complete positivity in quantum dynamics is to do with maintaining positivity in the presence of a witness, i.e., point 1 above, then the key question, as we now show, is whether the assignment map is CP.  If we consider some dynamical map $\Psi_{U}^{\CS}:\TrB\CS\to\BHS$ for $U\in \Sgp$ and a finite-dimensional witness Hilbert space $\HW$, then 
\begin{equation}
	\Psi_{U}^{\CS}\otimes\id_{\BHW} = \TrB\circ\Ad_{U\otimes\identity_{\HW}}\circ(\AM_{\CS}\otimes \id_{\BHW})
\end{equation}
is the dynamical map $\Psi_{U\otimes\identity}^{\CS\otimes\BHW}$ for the $\Sgp\otimes\identity_{\HW}$-consistent subspace $\CS\otimes\BHW$ (see Lemma \ref{lem:GConsistStability}).  
If $\Psi_{U}\otimes\id_{\BHW}$ is positive for all finite-dimensional $\HW$, then, mathematically (by definition \ref{def:CPmap}), it is considered completely positive.  However, it is possible that there exists a witness state space $\HW$ such that  $\Psi_{U}\otimes\id_{\BHW}$ is positive, while $\AM_{\CS}\otimes\id_{\BHW}$ is non-positive, as illustrated in Fig.~\ref{fig:hiddenNonpositivity}. 

\begin{figure}[t]
	\tikzimageswap{\includegraphics{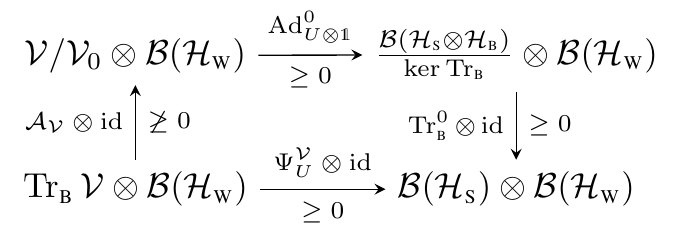}}{
	\tikzsetnextfilename{figure7}
	\begin{tikzpicture}
		\matrix (m) [matrix of math nodes, ampersand replacement=\&,row sep=2em,column sep=3em,minimum width=2em]
		{
	  		\CS/\CS_{0}\otimes\BHW \& \frac{\BHSB}{\ker\TrB}\otimes\BHW \\
	  		\TrB\CS\otimes\BHW \& \BHS\otimes\BHW\\ 
	  	};
	  	\path[-stealth,font=\scriptsize]
		(m-1-1) edge node [above] {$\Ad_{U\otimes\identity}^{0}$} node [below] {$\geq 0$} (m-1-2)
		(m-2-1) edge node [above] {$\Psi_{U}^{\CS}\otimes\id$} node [below] {$\geq 0$} (m-2-2)
		(m-2-1) edge node [left] {$\AM_{\CS}\otimes\id$} node [right] {$\not\geq 0$} (m-1-1)
		(m-1-2) edge node [left] {$\TrB^{0}\otimes\id$} node [right] {$\geq 0$} (m-2-2);
	\end{tikzpicture}}
	\caption{It is possible for non-positivity in the assignment map (possibly tensored with the identity on a witness matrix algebra) to be hidden by the positivity of $\Ad_{U}$ and $\TrB$ for all $U\in\Sgp$, resulting in dynamical maps which appear positive or completely positive according to the mathematical definitions.  However, strictly speaking, these dynamical maps do not satisfy the first answer to the question "why complete positivity?".  An example is considered in Example~\ref{ex:consistentPosCounterEx}.}
	\label{fig:hiddenNonpositivity}
\end{figure}

\begin{observation}
	In such a situation (where $\Psi_{U}\otimes\id_{\BHW} \geq 0$ but $\AM_{\CS}\otimes\id_{\BHW}\ngeq 0$), the positivity of $\Psi_{U}\otimes\id_{\BHW}$ should be considered ``non-physical'' as it arises from assigning some states in $\TrB\CS\otimes\BHW$ to non-positive operators in $\CS\otimes\BHW$ and ``evolving'' those non-positive operators.  If the desire was to show that all system-witness states in $\TrB\CS\otimes\BHW\cap\DSW$ evolve to states in $\DSW$, then we have not fulfilled this goal.  It is only satisfied if the assignment map $\AM_{\CS}\otimes\id_{\BHW}$ is itself a positive map.  This obviously holds for all finite dimensional witnesses $\HW$ if and only if $\AM_{\CS}$ is completely positive (CP).
\end{observation}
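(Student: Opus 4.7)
The plan is to establish the Observation by unpacking the factorization of $\Psi_{U}^{\CS}\otimes\id_{\BHW}$ through the assignment map, separating the mathematical content from the interpretive content, and then invoking the definition of complete positivity. First I would apply Lemma \ref{lem:GConsistStability} to conclude that $\CS\otimes\BHW$ is $(\Sgp\otimes\identity_{\HW})$-consistent, so that the entire machinery of Section \ref{sec:linDynMaps} applies to the tensored subspace. This yields the factorization $\Psi_{U}^{\CS}\otimes\id_{\BHW} = (\TrB^{0}\otimes\id)\circ(\Ad_{U\otimes\identity_{\HW}}^{0})\circ(\AM_{\CS}\otimes\id_{\BHW})$, obtained from Figure \ref{fig:commutativeDiagram2} by appending $\HW$, with the two leftmost arrows already known to be completely positive in the matrix orderings introduced earlier.

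Next I would argue the interpretive step. Because $\Ad_{U\otimes\identity_{\HW}}^{0}$ and $\TrB^{0}\otimes\id$ are positive maps, any non-positivity introduced by $\AM_{\CS}\otimes\id_{\BHW}$ when it acts on a legitimate density matrix $\sigma_{\textsc{sw}}\in\DSW\cap(\TrB\CS\otimes\BHW)$ can be absorbed downstream, and the final output $\Psi_{U}^{\CS}\otimes\id_{\BHW}(\sigma_{\textsc{sw}})$ can still be positive. In that case the intermediate element of $(\CS\otimes\BHW)/(\CS_{0}\otimes\BHW)$ to which $\sigma_{\textsc{sw}}$ is assigned is an affine subspace containing no joint density matrix in $\DSBW$, so the middle step $\Ad_{U\otimes\identity_{\HW}}^{0}$ is not operating on any physically realizable system-bath-witness state. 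Insisting that at every link of the pipeline the input be a bona fide density matrix (or at least an affine subspace containing one) is precisely the condition that $\AM_{\CS}\otimes\id_{\BHW}$ itself be positive in the matrix ordering on $\CS/\CS_{0}\otimes\BHW$ defined in Section \ref{sec:CPassignment}. This is the substantive content of ``physicality'' in the Observation.

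Finally I would close by directly invoking Definition \ref{def:CPmap}: positivity of $\AM_{\CS}\otimes\id_{\BHW}$ for every finite-dimensional witness $\HW$ is, by definition, complete positivity of $\AM_{\CS}$, so the equivalence claimed at the end of the Observation is tautological once the preceding physical reading has been accepted. The main obstacle is not computational but conceptual: one has to justify why absorption of non-positivity by downstream CP maps should be judged unphysical rather than a benign mathematical artifact. The mere existence of the phenomenon, namely positive $\Psi_{U}^{\CS}\otimes\id_{\BHW}$ coexisting with non-positive $\AM_{\CS}\otimes\id_{\BHW}$, requires a concrete witness, and for that I would defer to Example \ref{ex:consistentPosCounterEx}, which is promised precisely to exhibit this cancellation in action.
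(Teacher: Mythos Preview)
Your proposal is correct and mirrors the paper's own treatment: the Observation is not given a separate proof in the paper but is presented as an immediate consequence of the factorization $\Psi_{U}^{\CS}\otimes\id_{\BHW} = \TrB\circ\Ad_{U\otimes\identity_{\HW}}\circ(\AM_{\CS}\otimes\id_{\BHW})$ displayed just before it, the commutative diagram in Figure~\ref{fig:hiddenNonpositivity}, and Definition~\ref{def:CPmap}, with Example~\ref{ex:consistentPosCounterEx} supplying the concrete witness of the phenomenon. Your unpacking via Lemma~\ref{lem:GConsistStability} and the matrix ordering of Section~\ref{sec:CPassignment} is exactly the scaffolding the paper has already put in place, so there is no substantive divergence.
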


We now give an explicit example of such non-physical complete positivity. This is a special case of a class of examples considered in \cite{Rodriguez-Rosario:2010} to illustrate non-positive assignment maps.
\begin{myexample}
	\label{ex:consistentPosCounterEx}
	Let $d_{\textsc{s}} = \dim\HS$, $d_{\textsc{b}} = \dim\HB$, $\rho = \frac{1}{d_{\textsc{s}}d_{\textsc{b}}}\identity$ and $\sigma=\ketbra{\psi}{\psi}\otimes\ketbra{\phi}{\phi}$ for any states $\ket{\psi}\in\HS$ and $\ket{\phi}\in\HB$.  Then $\CS = \Span_{\mathbb{C}}\{\rho, \sigma\}$ is a $\USB$-consistent subspace.  The corresponding assignment map $\AM_{\CS}$ is not positive (see also Fig.~\ref{fig:examplePhysicalDomain}), but $\Psi_{U} = \TrB\circ\Ad_{U}\circ\AM_{\CS}$ is CPZE for all $U\in\USB$.
\end{myexample}
\begin{proof}
	First, note that $\AM_{\CS}$ is not positive, because for $a\geq 0$ and $-\frac{a}{d_{\textsc{s}}}\leq b< -\frac{a}{d_{\textsc{s}}d_{\textsc{b}}}$, $a\rho+b\sigma\ngeq 0$ but $\TrB[a\rho+b\sigma] = \frac{a}{d_{\textsc{s}}}\identity + b\ketbra{\psi}{\psi}\geq 0$.  

	On the other hand, let $\mathcal{P}:\BHS\to\TrB\CS$ be the orthogonal projection onto $\TrB\CS$ and let $\{\ket{1},\dots, \ket{d_{\textsc{s}}}\}$ be an orthonormal basis for $\HS$ with $\ket{1} = \ket{\psi}$.  Then $\mathcal{P}(\ketbra{i}{j}) = 0$ for $i\neq j$, $\mathcal{P}(\ketbra{1}{1}) = \ketbra{1}{1}$, and $\mathcal{P}(\ketbra{i}{i}) = (\identity - \ketbra{1}{1})/(d_{\textsc{s}}-1)$ for $i\neq 1$.  For any $U\in\USB$, the Choi matrix for the map $\Psi_{U}^{\CS}\circ\mathcal{P}$ is
	\begin{equation}
		\TrB(U\sigma U^{\dag})\otimes\ketbra{1}{1} + \frac{\identity - \TrB(U\sigma U^{\dag})}{d_{\textsc{s}}-1}\otimes (\identity - \ketbra{1}{1})
	\end{equation}
	because $\ketbra{1}{1}$ is mapped to $\sigma$ via the assignment map and  $\identity - \ketbra{1}{1}$ is mapped to $d_{\textsc{s}}\rho-\sigma$.  Since $\TrB(U\sigma U^{\dag})\in\DB$, $\identity - \TrB(U\sigma U^{\dag})$ is also positive, so this Choi matrix is obviously positive.  Therefore $\Psi_{U}\circ\mathcal{P}$ is CP, and $\Psi_{U}$ is CPZE.
	\begin{figure}
		\scalebox{0.95}{
		\tikzimageswap{\includegraphics{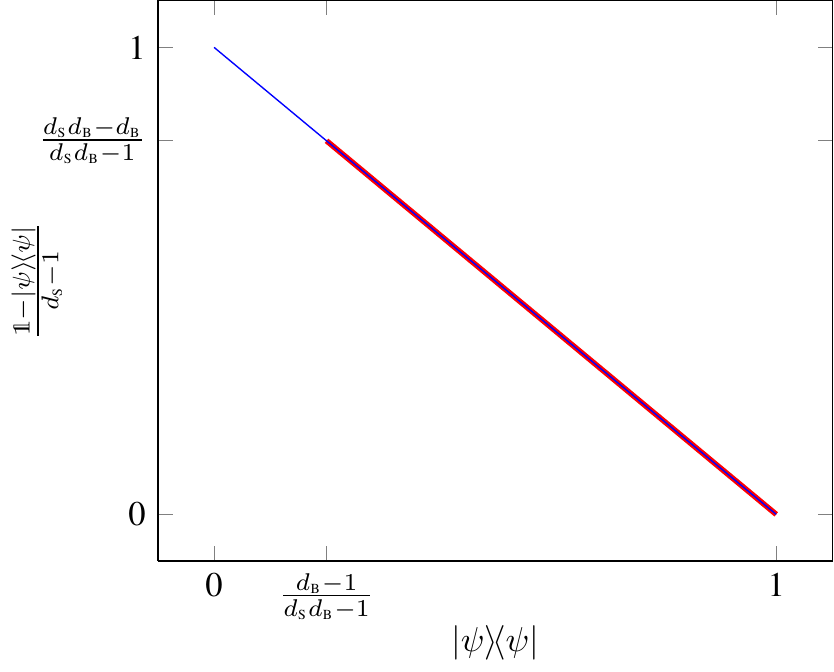}}{
		\tikzsetnextfilename{figure8}
		\begin{tikzpicture}
		    \begin{axis}[
		        xlabel=$\ketbra{\psi}{\psi}$,
		        ylabel=$\frac{\identity-\ketbra{\psi}{\psi}}{d_{\textsc{s}}-1}$,
		        xtick={0,0.2,1},
		        ytick={0,0.8,1},
		        xticklabels = {0, $\frac{d_{\textsc{b}}-1}{d_{\textsc{s}}d_{\textsc{b}}-1}$, 1},
		        yticklabels = {0,$\frac{d_{\textsc{s}}d_{\textsc{b}}-d_{\textsc{b}}}{d_{\textsc{s}}d_{\textsc{b}}-1}$, 1}
		    ]
		    \addplot[ultra thick, color=red] coordinates {
		        (0.2,0.8)
		        (1,0)
			};
		    \addplot[color=blue] coordinates {
		        (0,1)
		        (1,0)
			};
		    \end{axis}
		\end{tikzpicture}}}
		\caption{The system states $\DS\cap\TrB\CS$ for the consistent subspace $\CS$ described in Lemma \ref{ex:consistentPosCounterEx} are convex combinations of the states $\ketbra{\psi}{\psi}$ and $(\identity-\ketbra{\psi}{\psi})/(d_{\textsc{s}}-1)$, indicated by the thin blue line in this figure.  The physical domain $\TrB(\DSB\cap\CS)$, indicated by the thick red line, comprises convex combinations of $\ketbra{\psi}{\psi}$ and $(d_{\textsc{b}}\identity - \ketbra{\psi}{\psi})/(d_{\textsc{s}}d_{\textsc{b}}-1)$.  In particular, $\AM_{\CS}\big[(\identity-\ketbra{\psi}{\psi})/(d_{\textsc{s}}-1)\big] = (d_{\textsc{s}}\rho-\sigma)/(d_{\textsc{s}}-1)$ is a unit trace, \emph{non-positive} operator in $\CS$ and therefore not a state.}
		\label{fig:examplePhysicalDomain}
	\end{figure}
\end{proof}

\subsubsection{Kraus OSR for Dynamical Maps}
On the other hand, if the existence of a Kraus OSR for the dynamical maps (point \ref{item:krausOSRInterp} above) is of primary importance, then the key question is not whether $\Psi_{U}^{\CS}$ is CP, but rather whether $\Psi_{U}^{\CS}$ is completely positively extensible (CPE) for each $U\in\Sgp$. 
This is because 
\begin{mylemma}
	For any $\Sgp$-consistent subspace $\CS$ and any $U\in\Sgp$, the dynamical map $\Psi_{U}^{\CS}$ admits a Kraus OSR if and only if $\Psi_{U}^{\CS}$ is CPE.
\end{mylemma}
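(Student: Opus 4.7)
The plan is to prove both directions of the equivalence as direct consequences of Kraus's theorem, which states that a linear map on a full matrix algebra is completely positive if and only if it admits a Kraus operator sum representation $\Phi(X) = \sum_k E_k X E_k^\dagger$.

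For the ``only if'' direction, suppose $\Psi_U^\CS$ admits a Kraus OSR, so that for every $X \in \TrB\CS$ we have $\Psi_U^\CS(X) = \sum_k E_k X E_k^\dagger$ for some operators $E_k \in \BHS$. I would then define the extension $\tilde{\Psi} : \BHS \to \BHS$ by the same formula $\tilde{\Psi}(X) = \sum_k E_k X E_k^\dagger$ for all $X \in \BHS$. By construction, $\tilde{\Psi}|_{\TrB\CS} = \Psi_U^\CS$, and $\tilde{\Psi}$ is completely positive since any map in Kraus form is CP (tensor with the identity and check positivity on rank-one positives). Hence $\Psi_U^\CS$ is CPE.

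For the ``if'' direction, suppose $\Psi_U^\CS$ is CPE, so that there exists a CP map $\tilde{\Psi}: \BHS \to \BHS$ with $\tilde{\Psi}|_{\TrB\CS} = \Psi_U^\CS$. Since $\tilde{\Psi}$ is a CP map on the full matrix algebra $\BHS$, Kraus's theorem (equivalently, the positivity of the Choi matrix \cite{Choi:1975}) guarantees the existence of operators $\{E_k\} \subset \BHS$ such that $\tilde{\Psi}(X) = \sum_k E_k X E_k^\dagger$ for all $X \in \BHS$. Restricting both sides to $X \in \TrB\CS$ gives a Kraus OSR for $\Psi_U^\CS$.

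Since both directions amount to applying Kraus's theorem in one direction and observing closure of the Kraus form under restriction/extension in the other, I do not anticipate any real obstacles. The one subtlety worth noting is that the Kraus form trivially extends from the state-spanned subspace $\TrB\CS$ to all of $\BHS$ simply because the expression $\sum_k E_k X E_k^\dagger$ is defined for all $X\in\BHS$; in particular, this is why the ``CP plus Kraus'' equivalence persists even though $\Psi_U^\CS$ is not originally defined on the full algebra. This also highlights that the extension need not be unique (different choices of representatives $\{E_k\}$ or different Kraus-form extensions agree on $\TrB\CS$ but may differ on $\BHS\setminus\TrB\CS$), consistent with the remarks in Section \ref{sec:OSR}.
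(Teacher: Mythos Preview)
Your proposal is correct and follows essentially the same approach as the paper's own proof: both directions invoke Choi/Kraus to pass between a Kraus OSR on the full algebra $\BHS$ and complete positivity of an extension, with the restriction/extension step being the trivial observation that the formula $\sum_k E_k X E_k^\dagger$ makes sense on all of $\BHS$. Your added remark about non-uniqueness of the extension matches the paper's own comment immediately following its proof.
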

\begin{proof}
	If $\Psi_{U}^{\CS}$ is CPE, the CP extension $\hat{\Psi}_{U}$ is defined over the entire algebra $\BHS$, so that Choi's theorem \cite{Choi:1975} may be invoked to conclude that $\hat{\Psi}_{U}$ admits a Kraus OSR.  Likewise, if $\Psi_{U}$ may be written in terms of a Kraus OSR, then that OSR defines a CP extension $\hat{\Psi}_{U}:\BHS\to\BHS$ so that $\Psi_{U}$ is CPE.  
\end{proof}
It should be noted that, while the map $\Psi_{U}$ is uniquely defined by the choice of $\Sgp$-consistent subspace $\CS$ and the choice of $U\in\Sgp$, neither its CP extension $\hat{\Psi}_{U}$ (if one exists) nor the Kraus OSR thereof is uniquely defined.  It should also be noted that any Kraus OSR obtained in this way is still subject to the restrictions mentioned earlier: it cannot be meaningfully applied to any state outside of the physical domain $\TrB(\DSB\cap\CS)$.

\section{Examples}
\label{sec:Examples}

In this section we apply the general framework developed above to earlier work on the topic of complete positivity. In many cases we reinterpret this earlier work in light of the notion of $\Sgp$-consistency, CP, CPTE, and CPZE maps.

\subsection{Kraus (1971)}

The standard Kraus map \cite{Kraus:1971} is built from a $\USB$-consistent subspace of the form $\CS = \BHS\otimes\rhoB$ for some fixed bath state $\rhoB\in\DB$, so that the admissible initial states $\rhoSB\in\{\rhoS\otimes\rhoB\;:\;\rhoS\in\DS\} = \DSB\cap\CS$ are uncorrelated.  As was mentioned at the end of Section~\ref{sec:OSR}, the assignment map $\AM:\TrB\CS\to\CS/\CS_{0}$ associated with this subspace $\CS$ admits an OSR $\AM(X) = \sum_{\nu}Q_{\nu}X Q_{\nu}^{\dag}$ with operators $Q_{\nu} = \sqrt{\lambda_{\nu}}\identity\otimes\ket{\nu}$ where $\rhoB = \sum_{\nu}\lambda_{\mu}\ketbra{\nu}{\nu}$ is the eigendecomposition of $\rhoB$. This OSR is evidently completely positive. It follows that, for any joint unitary evolution $U\in\USB$, the subdynamical map $\Psi_{U}^{\CS}$ is completely positive and admits the OSR
\begin{subequations}
\label{Kraus}
\begin{align}
	\rhoS(t) &= \Psi_{U}^{\CS}[\rhoS(0)]=\sum_{\mu,\nu}E_{\mu\nu} \rhoS(0) E_{\mu\nu}^\dag\\
	\mbox{for   } E_{\mu\nu} &= \bra{\mu}UQ_{\nu} = \sqrt{\lambda_\nu}\bra{\mu}U\ket{\nu}, 
\end{align}
\end{subequations}
where the operation elements $E_{\mu\nu}$ are the Kraus operators.  It is worth noting that, in this case, the ``physical domain'' includes all system states, i.e., $\TrB(\DSB\cap\CS) = \DS$.  A theorem of Pechukas \cite{Pechukas:1994}, the proof of which was supplied by Pechukas in the case of a 2-dimensional system, and by Jordan, Shaji, and Sudarshan \cite{Jordan:2004} for general finite-dimensional systems, shows that the only $\USB$-consistent subspaces exhibiting this property (that $\TrB(\DSB\cap\CS) = \DS$) are those of the form $\CS = \BHS\otimes\rhoB$ for a \emph{fixed} $\rhoB\in\DB$.

\subsection{Pechukas (1994)}
To illustrate the fact that subsystem dynamics need not be completely positive, Pechukas offered an example involving a two-state system and bath and a $\USB$-consistent subspace defined by a $\mathbb{C}$-linear, $\dag$-linear assignment map of the form
\begin{equation}
	\rhoS\mapsto \big[\rhoS\big(\rhoS^{\mathrm{eq}}\big)^{-1}\rhoSB^{\mathrm{eq}} + \rhoSB^{\mathrm{eq}}\big(\rhoS^{\mathrm{eq}}\big)^{-1}\rhoS\big]/2
\end{equation}
where $\rhoS^{\mathrm{eq}} = \TrB(\rhoSB^{\mathrm{eq}})$ are fixed ``equilibrium states'' and $\rhoS^{\mathrm{eq}}>0$ (not just $\geq 0$).
If $\rhoSB^{\mathrm{eq}}$ is \emph{not} a tensor product state, and the domain of this map is taken to be all of $\BHS$, then the image will be a well-defined $\USB$-consistent subspace of $\BHSB$.  However, the assignment map cannot be positive under these assumptions, let alone completely positive, because pure states $\rhoS$ are not mapped to product states $\rhoS\otimes\rhoB$ as a positive assignment map must do (this last point may be seen as a consequence of the fact that a bipartite state $\rhoSB$ with a pure reduced state $\rhoS$ possesses zero mutual information, i.e., saturates subadditivity of the von Neumann entropy \cite{Araki:1970}, and must therefore be a product state \cite{Bruch:1970}).

\subsection{Alicki (1995)}
\label{sec:Alicki}
Commenting on Pechukas' theorem concerning complete positivity, Alicki \cite{Alicki:1995} suggested abandoning either consistency (i.e. $\TrB\circ\AM_{\CS} =\id$) or positivity of the assignment map, rather than giving up complete positivity of the dynamical maps.  To illustrate the possible loss of consistency, he describes a scheme for turning a CP map $T:\BHS\to\BHS$ into a CP assignment map with domain given by the fixed points of $T$.  Suppose $T$ is represented as $T(\rhoS) = \sum_{n}V_{n}\rhoS V_{n}^{\dag}$ and fix some $\rhoSB^{\mathrm{eq}}\in\DSB$.  Then one can postulate a $\mathbb{C}$-linear assignment map
\begin{subequations}
\begin{align}
	\AM_{\CS}(\rhoS) & = \sum_{n}V_{n}\rhoS V_{n}^{\dag}\otimes \frac{\Tr_{\textsc{s}}[(V_{n}^{\dag}V_{n}\otimes\identity)\rhoSB^{\mathrm{eq}}]}{\Tr[(V_{n}^{\dag}V_{n}\otimes\identity)\rhoSB^{\mathrm{eq}}]}\\
	& = \sum_{ij\alpha n}Q_{ij\alpha n}\rhoS Q_{ij\alpha n}^{\dag}
\end{align}
\end{subequations}
where $Q_{ij\alpha n}\in\BL(\HS;\HSB)$ is the operator
\begin{equation}
	Q_{ij\alpha n} = V_{n}\otimes\frac{\braket{i}{(V_{n}\otimes\identity)\sqrt{\rhoSB^{\mathrm{eq}}}|j\alpha}}{\sqrt{\Tr[(V_{n}^{\dag}V_{n}\otimes\identity)\rhoSB^{\mathrm{eq}}]}}
\end{equation}
so that $\AM_{\CS}$ is CP.  Since $\TrB\circ\AM_{\CS} = T$, the correct domain of definition of this assignment map is the $\mathbb{C}$-linear space of fixed points of $T$ in $\BHS$.  Letting $\CS'$ be the image of this domain through $\AM_{\CS}$, we may define a $\USB$-consistent subspace by $\CS = \Span_{\mathbb{C}}(\DSB\cap\CS')$.  Since the assignment map $\AM_{\CS}$ is CP, it is clear that the dynamical maps $\Psi_{U}^{\CS}:\TrB\CS\to\BHS$ are also CP for all $U\in\USB$.  It is also clear that, although this construction is described by Alicki as involving a loss of consistency, it is entirely valid within the framework of $\Sgp$-consistent subspaces.  It should be noted, however, that this construction depends not only on the map $T$ and the ``equilibrium state'' $\rhoSB^{\mathrm{eq}}$, but also on the \emph{representation} of $T$.  In general, two equivalent OSRs $T = \sum V_{n}\cdot V_{n}^{\dag} = \sum R_{m}\cdot R_{m}^{\dag}$ will not yield the same consistent subspace $\CS$, thus will not yield the same assignment map $\AM_{\CS}$ or dynamical maps $\Psi_{U}^{\CS}$. 

Alicki goes on to consider a generally nonlinear assignment map 
\begin{equation}
	\AM(\rhoS) = \sum \lambda_{n}P_{n}\otimes \frac{\Tr_{\textsc{s}}[\rhoSB^{\mathrm{eq}}(P_{n}\otimes\identity)]}{\Tr[\rhoSB^{\mathrm{eq}}(P_{n}\otimes\identity)]}
\end{equation} 
where again $\rhoSB^{\mathrm{eq}}\in\DSB$ is a fixed ``equilibrium state'', and $\rhoS = \sum \lambda_{n}P_{n}$ is the spectral decomposition of $\rhoS$.  Then $\TrB\circ\AM = \id$ on all of $\DS$, but $\AM:\DS\to\DSB$ is convex-linear if and only if $\rhoSB^{\mathrm{eq}}$ is a tensor product state.  In either case, the image of this assignment map $\AS = \AM(\DS)\subset\DSB$ is a valid $\USB$-consistent subset as described in Section \ref{sec:subsysDynMaps}, contained within the subset of zero-discord states in $\DSB$ \cite{Ollivier:2002} since any state in the image of $\AM$ is of the form $\sum \lambda_{n}P_{n}\otimes\sigma_{n}$ for orthogonal projectors $\{P_{n}\}$ and states $\{\sigma_{n}\}\subset\DSB$. As such, $\rhoSB^{\mathrm{eq}}\in\AS$ only if it has this very special form.
It is clear that the resulting dynamical maps $\tau_{U}^{\AS}$ will be positive, but they will also generally \emph{not} be convex-linear.  
However, the physical meaning of $\rhoSB^{\mathrm{eq}}$ is unclear when $\rhoSB^{\mathrm{eq}}\notin\AS$, and moreover $\tau_{U}^{\AS}\otimes\id$ and $\AM\otimes\id$ have no meaning when $\tau_{U}^{\AS}$ and $\AM$ are nonlinear, so that complete positivity has no meaning either.

\subsection{\v{S}telmachovi\texorpdfstring{\v{c}}{c} and Bu\v{z}ek (2001)}

In addition to Example~\ref{ex:StelBu}, \v{S}telmachovi\v{c} and Bu\v{z}ek \cite{Stelmachovic:2001} offer a second example which demonstrates that, if we \emph{fix} any $\rhoS(0),\rhoS(T)\in\DS$, then it is possible to choose $\HB$, a $\USB$-consistent subspace $\CS\subset\BHSB$, and a unitary transformation $U\in\USB$ such that the corresponding dynamical map $\Psi_{U}^{\CS}$ takes $\rhoS(0)$ to $\rhoS(T)$.  Indeed, they show this can be done with a Kraus-like consistent subspace $\CS = \BHS\otimes\rhoB$, where $\HB\simeq \HS$ and  $\rhoB = \rhoS(T)$, and where the unitary transformation $U = \sum\ketbra{i}{j}\otimes\ketbra{j}{i}$ is the swap operator $U\ket{\psi}\otimes\ket{\phi} = \ket{\phi}\otimes\ket{\psi}$.

\subsection{Jordan, Shaji, and Sudarshan (2004)}
Jordan, Shaji, and Sudarshan \cite{Jordan:2004} develop a two qubit example (one ``system'' qubit and one ``bath'' qubit) in detail in order to examine quantum dynamics in the presence of initial entanglement.  Their example can be expressed in terms of a $\Sgp$-consistent subspace, where $\Sgp\subset \USB$ is the one-parameter subsemigroup generated by the Hamiltonian $H = (\omega/2) Z\otimes X$ where $Z$ and $X$ are Pauli operators.  The $\Sgp$-consistent subspace $\CS\subset \BHSB$ is then the 14-dimensional orthogonal complement (with respect to the Hilbert-Schmidt inner product) of $\Span_{\mathbb{C}}\{\alpha\identity+ Y\otimes X, \beta\identity-X\otimes X\}$ for some fixed choice of $\alpha,\beta\in(-1,1)$.  The kernel $\CS_{0} = \ker\big(\TrB\big|_{\CS}\big)$ is then the 10-dimensional subspace
\begin{align}
	\CS_{0} & = \Span_{\mathbb{C}}\{\identity\otimes X,\identity\otimes Y,\identity\otimes Z, X\otimes Y, X\otimes Z, Y\otimes Y,\nonumber\\
	 & \qquad \qquad \quad  Y\otimes Z, Z\otimes X, Z\otimes Y, Z\otimes Z\}.
\end{align}
It is straightforward to check that $\Sgp\cdot\CS_{0}\subset\ker\TrB$ as required.  The quotient space $\CS/\CS_{0}$ is then 4-dimensional, so that $\TrB\CS = \BHS$.  However, as pointed out in \cite{Jordan:2004}, unless $\alpha = \beta = 0$, the assignment map $\AM_{\CS}:\TrB\CS\to\CS/\CS_{0}$ is not positive.  If it were positive, then $\AM_{\CS}(\ketbra{0}{0})$ would be an affine subspace of $\BHSB$ containing a state of the form $\ketbra{0}{0}\otimes\rhoB$ for some $\rhoB\in\DB$.  But unless $\alpha = \beta = 0$, no state of that form lies in $\CS$ since such states are not orthogonal to $\alpha\identity+ Y\otimes X$ and $\beta\identity-X\otimes X$.  It follows that the affine subspace $\AM_{\CS}(\ketbra{0}{0})$ contains only unit trace, non-positive operators, so that $\AM_{\CS}$ is non-positive. Moreover it is shown that the $\Psi_{U}^{\CS}(\ketbra{0}{0})$ need not be positive for $U\in \Sgp$ and therefore $\Psi_{U}^{\CS}$ is not completely positive for all $U\in \Sgp$.  On the other hand, if $\alpha = \beta = 0$, then $\CS$ contains the $\USB$-consistent subspace $\hat{\CS} = \BHS\otimes\identity$ for which $\CS = \hat{\CS}\oplus\CS_{0}$.  In that case, the assignment map is not only positive, but completely positive.

\subsection{Carteret, Terno, and \.{Z}yczkowski (2008)}
Carteret \emph{et al.} \cite{Carteret:2008}, consider an example with a one qubit system and bath and a $\USB$-consistent subspace of the form 
\begin{equation}
	\CS = \Span_{\mathbb{C}}\left\{\sigma_{1}\otimes\identity, \sigma_{2}\otimes\identity, \sigma_{3}\otimes\identity, \identity+a\sum_{i=1}^{3}\sigma_{i}\otimes\sigma_{i}\right\},
\end{equation}
where $\{\sigma_{i}\}$ are the Pauli operators and $-1<a<1/3$ is a fixed constant.  For $a \neq 0$, the associated assignment map $\AM_{\CS}$ is non-positive, admitting a physical domain $\TrB(\DSB\cap\CS)$ which, within the Bloch sphere, is the ball centered at $\identity/2$ with radius $\sqrt{(1+a)(1-3a)}$ for $a\geq 0$ and with radius $1+a$ for $a\leq 0$ (this latter point about $a<0$ was missed in \cite{Carteret:2008}). The authors note that the dynamical map $\Psi_{U}^{\CS}$ is trivially completely positive for any $U\in\USB$ that commutes with $\sum\sigma_{i}\otimes\sigma_{i}$.  They then consider unitary transformations of the form
\begin{equation}
	U = \begin{pmatrix}1 & 0 & 0 & 0 \\ 0 & \cos \theta & \sin \theta  & 0 \\ 0 & -\sin \theta & \cos \theta & 0 \\ 0 & 0 & 0 & 1\end{pmatrix},
\end{equation}
pointing out that the $\theta = \pi/4$ case yields a CP dynamical map $\Psi_{U}^{\CS}$ and incorrectly claiming that $\theta = \pi$ yields a non-CP dynamical map (this case also yields a CP $\Psi_{U}^{\CS}$, 
because for $\theta = \pi$, $U=\sigma_{z}\otimes\sigma_{z}$ commutes with $\sum \sigma_{i}\otimes\sigma_{i}$).  However, it may be shown that the Choi matrix for the dynamical map $\Psi_{U}^{\CS}$ associated with such a $U$ is given by
\begin{align}
	\mathrm{Choi}(\Psi_{U}^{\CS}) & = \frac{1}{2}\big[\identity + \cos^{2}(\theta)\sigma_{z}\otimes\sigma_{z} + a\sin(2\theta)\identity\otimes\sigma_{z} \nonumber\\
	& \qquad + \cos(\theta)\big(\sigma_{x}\otimes\sigma_{x} - \sigma_{y}\otimes\sigma_{y}\big)\big]
\end{align}
so that other values for $\theta$ and $a$ can yield non-CP maps, for example the case $\theta = \pi/6$ and $a > 1/2\sqrt{3}$.

\subsection{Rodr\'{\i}guez-Rosario, Modi, Kuah, Shaji, and Sudarshan (2008)}
Rodr\'{\i}guez-Rosario \emph{et al.} \cite{Rod:08}, considered $\USB$-consistent subspaces of the form 
\begin{equation}
	\label{eq:RR08CS}
	\CS = \Span_{\mathbb{C}}\{\ketbra{i}{i}\otimes\sigma_{i}\},
\end{equation}
where $\{\ket{i}\}\subset\HS$ is an orthonormal basis for the system and $\{\sigma_{i}\}\subset\DB$ are bath states.  Then $\CS\cap\DSB$ comprises only zero-discord states, i.e., those that exhibit only classical correlations with respect to some measurement basis \cite{Ollivier:2002}.  They showed that, for such a $\CS$ and any $U\in\USB$, the corresponding dynamical map $\Psi_{U}^{\CS}$ is always CP. In fact, more is true: the assignment map $\AM_{\CS}:\TrB\CS\to\CS$ is CPZE since $\AM_{\CS}\circ\mathcal{P}_{\TrB\CS}(X) = \sum_{i,\alpha}E_{i\alpha}X E_{i,\alpha}^{\dag}$ is CP, where $\mathcal{P}_{\TrB\CS}:\BHS\to\TrB\CS$ is the orthogonal projection onto $\TrB\CS$ and $E_{i,\alpha} = \ketbra{i}{i}\otimes\sqrt{\sigma_{i}}\ket{\alpha}$ for some orthonormal basis $\{\ket{\alpha}\}\subset\HB$.  

As we discussed in Section \ref{sec:constrainedDomains}, whenever the semigroup $\Sgp$ contains nonlocal unitary operators, all $\Sgp$-consistent subspaces must be constrained in some way.  Moreover, if we accept that such constraints on the initial system-bath states are inevitable, there seems to be little reason to disallow constraints on $\TrB\CS$, which is the domain of the assignment map $\AM_{\CS}$ and of the dynamical maps $\Psi_{U}^{\CS}$ for all $U\in\Sgp$.  The authors of \cite{Rod:08} seem to take another position, however, stating in the introduction of that paper that ``the dynamical map is well-defined if it is positive on a large enough set of states such that it can be extended by linearity to all states of the system.''  It may be observed, as the authors themselves do, that for the $\dim(\HS)$-dimensional consistent subspaces $\CS$ in \eqref{eq:RR08CS}, the linear domain of definition, $\TrB\CS$, of the dynamical maps consists only of operators which are diagonal in the fixed basis $\{\ket{i}\}$.  Such dynamical maps are, in fact, not ill-defined at all, contrary to the position adopted in \cite{Rod:08}.  They are well-defined maps with well-defined domains which are low-dimensional subspaces of $\BHS$.  In addition, they are associated with a well-defined, CP assignment map $\AM_{\CS}:\TrB\CS\to\CS$, so that the complete positivity of the dynamical maps has a good, physical basis related to the unitary evolution of system-bath-witness states for arbitrary finite-dimensional witnesses (see Section \ref{sec:whyCP}).

\subsection{Shabani and Lidar (2009)}
\label{sec:ShabaniLidar2009}
Shabani and Lidar \cite{SL} expanded the class of consistent subspaces from the zero-discord subspaces considered in \cite{Rod:08}, to the class of all valid $\USB$-consistent subspaces of the form 
\begin{align}
	\CS = \Span_{\mathbb{C}}\{\ketbra{i}{j}\otimes \phi_{ij}\},
	\label{eq:SL}
\end{align}
where $\{\ket{i}\}$ is an orthonormal basis for $\HS$ and $\{\phi_{ij}\}\subset\BHB$ are bath operators.  Ref.~\cite{SL} showed that, within this class of consistent subspaces, the dynamical maps $\Psi_{U}^{\CS}$ are CP for all $U\in\USB$ if and only if $\CS\cap\DSB$ comprises only zero-discord states.  

In light of the framework developed here, it is clear that the results of \cite{SL} do \emph{not} amount in general to necessary and sufficient conditions for complete positivity, but are restricted to the class of consistent subspaces of the form \eqref{eq:SL}.  This lack of generality is a point that has been largely overlooked by the community, including by two of the present authors. Thus the present work amounts to a retraction and correction of some of the claims made in \cite{SL}. The issue is that, while the form \eqref{eq:SL} is general enough to describe any given state $\rhoSB$, it is not general enough to describe every $\Sgp$-consistent subspace.  As a simple example, the $2$-dimensional $\USB$-consistent subspace described in Example~\ref{ex:consistentPosCounterEx} is not of the form \eqref{eq:SL}, nor is it contained in any subspace of this form.

Let us also comment on \cite{SL2}, which does not address the question of whether the maps it constructs are CP or not, but states in its Theorem 3, that ``the most general form of a quantum dynamical process irrespective of the initial system-bath state (in particular arbitrarily entangled initial states are possible) is always reducible to a Hermitian map from the initial system to the final system state.'' We note that  this only holds for $\Sgp$-consistent subspaces; when the set of admissible initial states $\AS$ is not  $\Sgp$-consistent, there is no subsystem $\AS$-dynamical map (for some $U\in\Sgp$).

\subsection{Brodutch, Datta,  Modi, Rivas, and Rodr\'iguez-Rosario (2013)}
To demonstrate that vanishing discord is not necessary for complete positivity, Brodutch \textit{et al.} \cite{Brodutch:2013} showed that $\USB$-consistent subspaces $\CS$ exist for which almost all states in $\CS\cap\DSB$ are discordant, but the dynamical maps $\Psi_{U}^{\CS}$ are nonetheless CPZE (and therefore CP) for all $U\in\USB$.  The counterexamples described in \cite{Brodutch:2013} are subspaces of the form 
\begin{equation}
	\CS = \Span_{\mathbb{C}}\{\rho_{01}\}\cup\{\ketbra{i}{i}\otimes\sigma_{i}\}_{i=2}^{n}
\end{equation}
where $\rho_{01} = \ketbra{0}{0}\otimes \sigma_{0} + \ketbra{1}{1}\otimes\sigma_{1} + \ketbra{+}{+}\otimes\sigma_{+}$, $\{\ket{i}\}$ is an orthornomal basis for $\HS$, and $\{\sigma_{+}, \sigma_{0}, \dots, \sigma_{n}\}\subset\DB$ are bath states.  

It should be mentioned that, while Brodutch, \textit{et al.} were correct to criticize the generality of the conclusion concerning the necessity of vanishing discord for complete positivity in \cite{SL}, their claim of nonlinearity appearing in \cite{SL} is not valid.  By expressing the construction of \cite{SL} in the language of $\Sgp$-consistent subspaces as in \eqref{eq:SL} it should be clear that, while the initial system-bath states lie in a constrained subspace -- as they always must for $\Sgp$-consistency, i.e., to obtain well-defined dynamical maps (see Lemma \ref{lem:localUconsistency})  -- all spaces and maps in the Shabani-Lidar framework are $\mathbb{C}$-linear.

\subsection{Buscemi (2013)}
Buscemi \cite{Buscemi:2014} has devised a general method for constructing $\USB$-consistent subspaces $\CS$ which yield CP dynamical maps $\Psi_{U}^{\CS}$ for all $U\in\USB$.  The technique involves first introducing a ``reference'' subsystem $R$ and choosing a fixed tripartite state $\rho_{\textsc{rsb}}$ that saturates the strong subadditivity inequality for von Neumann entropy; in other words the conditional mutual information, $I(R:B|S)_{\rho}$, is zero.  A subspace $\CS'\subset\BHSB$ is then formed as
\begin{equation}
	\CS' = \{\Tr_{\textsc{r}}[(L\otimes\identity_{\textsc{sb}})\rho_{\textsc{rsb}}]\;:\; L\in\BL(\HH_{\textsc{r}})\}
\end{equation}
and the corresponding $\USB$-consistent subspace is $\CS = \Span_{\mathbb{C}}(\DSB\cap\CS')$.  Buscemi proves that such a construction always leads to CP dynamical maps.  While there is no claim that all $\USB$-consistent subspaces yielding CP dynamical maps must be formed in this way, the technique does yield a rich set of examples, including the zero-discord subspaces of Rodr\'{\i}guez-Rosario \emph{et al.} \cite{Rod:08} and the examples of Brodutch \emph{et al.} \cite{Brodutch:2013}, as well as examples featuring highly entangled states.  It should also be noted that, although Buscemi's method predicts that the dynamical maps will be CP in these examples, it does not appear to predict the fact that the assignment maps are CP (indeed, CPZE) in the examples of \cite{Rod:08} and \cite{Brodutch:2013}.

\section{Summary \& Open Questions}
\label{sec:Summary}

We have formulated a general framework for  open quantum system dynamics in the presence of initial correlations between system and bath.  It is based around the notion of $\Sgp$-consistent operator spaces $\CS\subset\BHSB$ representing the set of admissible initial system-bath states.  $\Sgp$-consistency is necessary to ensure that initial system-bath states with the same reduced state on the system evolve under all admissible unitary operators $U\in\Sgp$ to system-bath states with the same reduced state on the system, ensuring the dynamical maps $\Psi_{U}^{\CS}$ are well-defined.  Once such a $\Sgp$-consistent subspace is chosen, related concepts like the assignment map and the dynamical maps are uniquely defined.  In general, the dynamical maps may not be applied to arbitrary system states, but only to those in the physical domain $\TrB(\DSB\cap\CS)$. Interestingly, non-positive assignment maps can give rise to completely positive dynamical maps, but this is not a situation that is physically acceptable since it involves the propagation of unphysical system-bath ``states".

These $\Sgp$-consistent subspaces may be highly constrained, with the result that complete positivity of the assignment and dynamical maps may be defined in several distinct ways, relating to different interpretations and desirable features of complete positivity for open system dynamics. Indeed, we have identified two new types of completely positive maps, which we have called CPTE and CPZE. It is important to distinguish between these flavors of complete positivity and whether one is considering complete positivity of the assignment map or of the dynamical maps. {We have shown that various special cases considered in earlier work can be unified within the framework of $\Sgp$-consistency, and that earlier discussions of complete positivity are better understood using CP, CPTE, or CPZE maps.}

Having laid out this framework of $\Sgp$-consistent subspaces, we have left many important questions open.  The formalism is general enough to encompass other proposed frameworks, most notably the assignment map framework introduced by Pechukas \cite{Pechukas:1994}, and various worked examples.  These various results and special cases illustrate perhaps the biggest open question: what structural features characterize the subclass of $\Sgp$-consistent subspaces which give rise to completely positive dynamics (in any reasonable flavor)? Recent work \cite{Rod:08,SL} 
suggested that the answer might lie with quantum discord. However, more recently it became apparent that the focus on quantum discord is too restrictive \cite{Brodutch:2013,Buscemi:2014}, and here we have shown that earlier work has  illuminated some small corners of the space of $\Sgp$-consistent subspaces, but a complete analysis remains to be seen.

\acknowledgments
This research was supported by the ARO MURI grant W911NF-11-1-0268 and by NSF grant numbers PHY-969969 and PHY-803304.  

\bibliography{refs}

\end{document}